\theoremstyle{plain}
\def\max{{\rm max}}
\def\min{{\rm min}}
\long\def\longdelete#1{}
\newcommand{\oural}{$\mathcal{A}${}}
\newtheorem{clm}[theorem]{\textbf{Claim}}
\newcommand{\ps}{\mathrm{\psi_\sigma}}
\title{Improving the Bounds of the Online Dynamic Power Management Problem} 
\author{Ya-Chun Liang}{Department of Industrial Engineering and Engineering Management,
National Tsing Hua University, Hsinchu 30013, Taiwan}{ycliang512@gapp.nthu.edu.tw}{}{}
\author{Kazuo Iwama\footnote{Supported by MOST Taiwan under Grants
110-2223-E-007-001, 110-2811-E-007-507 and KAKENHI Japan under Grant 16H02782.}}{Department of Industrial Engineering and Engineering Management,
National Tsing Hua University, Hsinchu 30013, Taiwan}{iwama@ie.nthu.edu.tw}{}{}
\author{Chung-Shou Liao\footnote{Supported by MOST Taiwan under Grants 110-2221-E-007-106-MY3 and 110-2622-8-007-017.}}{Department of Industrial Engineering and Engineering Management, 
National Tsing Hua University, Hsinchu 30013, Taiwan}{csliao@ie.nthu.edu.tw}{}{}
\authorrunning{Y.-C. Liang, K. Iwama, C.-S. Liao} 
\keywords{Online algorithm, Energy scheduling, Dynamic power management} 
\begin{document}

\maketitle

\renewcommand{\algorithmicrequire}{\textbf{Input:}}
\renewcommand{\algorithmicensure}{\textbf{Output:}}

\newcommand{\il}{I\kern-.07em\ell}
\newcommand{\ir}{Ir}
\newcommand{\gf}{G\kern-.1emf}
\newcommand{\gl}{G\ell}
\newcommand{\gr}{Gr}
\newcommand{\ol}{O\ell}
\newcommand{\of}{Of}
\newcommand{\orr}{Or}
\newcommand{\hll}{H\ell}
\newcommand{\hrr}{Hr}

\newcommand{\cc}{\color{red}\ }
\newcommand{\ccc}{\color{black}\ }

\begin{abstract}
We investigate the {\em power-down mechanism} which decides when a machine transitions  
between states such that the total energy consumption,
characterized by 
execution cost, idle cost and switching cost,
is minimized. In contrast to most of the previous studies on the offline model, we focus on the online model
in which a sequence of jobs with their release time, execution time and deadline, arrive in an online fashion.
More precisely, 
we exploit a different switching on and off strategy and 
present an upper bound of 3, and further show a lower bound of 2.1, in a dual-machine model,
introduced by Chen et al. in 2014 [STACS 2014: 226-238], both of which beat the currently best result.     
\end{abstract}

\section{Introduction}
\label{intro}

\longdelete{ 
Machines consume energy. 
So the easiest way for a company to reduce the cost is to reduce the number of machines, such as cars, computers, copy machines, air conditioners, TV's, fridges, etc., installed in the company. 
If one office has two copy machines, it is often directed to give up one of them. 
Thus it appears to be indisputable that the number of machines should be decreased to save energy and associated cost. 
A bit surprisingly, however, this is not absolutely true; {\em increasing} the number of machines does reduce the cost in a certain~situation.}

Machines consume energy and many of them have the following property: namely machines have two states, ON and OFF, and change between the two states quite often.
Furthermore, it requires a relatively large amount of energy to change from OFF to ON. 
For instance, a laptop goes to sleep or shutdown (OFF) if we do not touch it for, 
e.g.,
one hour, and comes back to ON if we press some key,
but it is actually energy consuming.
Copy machines also turn off automatically if there is no job for, e.g., ten minutes. 
Similarly, it consumes relatively large energy for heating them up when a new job arrives. 
There are many other similar examples, and it is obviously important to design online algorithms to minimize the energy consumption for this type of machines. 
Fortunately most of those problems are essentially equivalent to the well-known \emph{ski-rental} problem and the optimal solution has been known for a long time both in deterministic and randomized cases~\cite{10.5555/241938.241951,karlin1988competitive}.

In this study, we investigate the online model of the dynamic power management problem, discussed by 
Irani et al.~\cite{irani2003online,irani2007algorithms} and Chen et al.~\cite{chen2015online},
which aims at determining when to transition a machine between the states:
\emph{busy} or \emph{idle} (ON) and \emph{sleep} (OFF) to finish all input jobs 
such that the energy consumption is minimized. 
Suppose a machine $M$ 
requires $E$ units of energy to change its state from OFF to ON and needs $\psi_\sigma$ units per unit of time to keep it ON, 
i.e. the \emph{idle} cost. 
The goal is to finish all input jobs, arriving in an online fashion, while minimizing the total energy consumption. 
Let us consider a simpler model first, where jobs arriving in an online fashion are 
required to be immediately performed without any delay.
One can observe that the optimal deterministic online algorithm of the ski rental problem can be applied to solve this model, with the competitive ratio of 2.
The intuition is quite straightforward, by turning off the machine $M$ $E/\psi_\sigma$ units of time after its last job is finished.
Note that the performance of an online algorithm is typically evaluated by competitive analysis. 
More precisely,
the quality of an online algorithm is measured by the worst case ratio, called \emph{competitive ratio}, 
which is defined to be the fraction of the cost of the online algorithm over that of the offline optimal algorithm, where the offline algorithm is aware of all jobs in advance.

\longdelete{
The optimal deterministic online algorithm $A$ is to turn off $M$ $E/\psi_\sigma$ units of time after its last job is finished and its competitive ratio (CR) is at most 2. Suppose the value $E/\psi_\sigma$ is 10 minutes for simplicity. 
The worst case is that jobs come every $10+\epsilon$ minutes; 
$A$ needs 20 units of energy (10 units for $E$ and 10 units for its
idle cost)\ccc per job but the optimal algorithm (OPT) $10(1+\epsilon)$\ccc units by keeping $M$ on all the time. 
More sophisticated models are also known, for instance, there is
an intermediate state between {\em idle} and {\em busy}, called {\em stand-by}, which requires a bit more energy to be
kept but relatively less energy cost to switch to ON. 
However, the model does not seem to change dramatically.

A significant change does happen in the above model if the execution of each job is allowed to be postponed.
}

However, when considering a more complicated model, in which each job $j$ is given as $(a_j,d_j,c_j)$, 
where $a_j$ denotes the arrival time, 
$d_j$ the deadline and $c_j$ the execution time, 
a dramatic change does happen.
In other words, the execution of each job is allowed to be postponed, 
which obviously increases the problem hardness.
There have been actually not many studies in the literature for the online model~\cite{10.5555/241938.241951}. 
Irani et al.~\cite{irani2003online,irani2007algorithms} initiated the study of combining the above power-down mechanism with \emph{dynamic speed scaling}, 
where the latter technique has been widely explored in the past decades~\cite{yao1995scheduling,bansal2007speed,chen2007procrastination,chen2006procrastination,irani2007algorithms}.
The basic concept of dynamic speed scaling is that a machine's processing speed can be adjusted dynamically, and the power consumption rate is usually represented by a convex function of the processing speed in terms of time expense.
They proposed the first online algorithm with a constant competitive ratio of $\max \{c_1c_2+c_1+2,4\}$, 
where $c_1$ and $c_2$ are some constant parameters in a given convex power function~\cite{irani2007algorithms}. 
For example, if the power function is quadratic, the upper bound of competitive ratio is~8.
Chen et al.~\cite{chen2015online} considered 
a similar model but without using dynamic speed scaling, 
where an online algorithm can use two machines $M_1$ and $M_2$
instead, under the same assumption that 
all input jobs must be 
finished by the 
offline optimal scheduler using a single machine.
This assumption is actually 
the so-called \emph{single machine schedulability}
condition~\cite{chetto1989scheduling}, which is characterized by the following lemma.
\begin{lemma}\label{assmpution_chen}
\emph{(Chetto et al.~\cite{chetto1989scheduling})}
For any set of jobs $\mathcal{J}$,
they can be optimally scheduled on one machine using the \emph{earliest-deadline-first}(EDF) principle.
That is,
the job with the earliest deadline is always selected for execution at any moment,
if and only if the following condition holds:
\begin{align}\label{input}
    \mbox{For any time interval $(\ell,r)$, we have } \sum_{j:j\in \mathcal{J}, \ell \leq a_j, d_j\leq r}c_j \leq r-\ell.
\end{align}
\end{lemma}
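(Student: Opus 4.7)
My plan is to prove the two directions of the equivalence separately, with the forward (necessity) direction being an easy volume argument and the reverse (sufficiency) direction being an exchange/contradiction argument showing that EDF never misses a deadline under the stated workload condition.

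For necessity, I will fix any interval $(\ell,r)$ and consider the set $S=\{j\in\mathcal{J} : \ell\le a_j,\ d_j\le r\}$. Every job in $S$ must be entirely executed inside the window $[\ell,r]$ in any feasible single-machine schedule, because it cannot begin before $a_j\ge\ell$ and must finish by $d_j\le r$. Since the machine processes at most $r-\ell$ units of work inside this window, summing gives $\sum_{j\in S} c_j\le r-\ell$, which is exactly the inequality in (\ref{input}). This half requires only the existence of some feasible schedule (not necessarily EDF) and takes a few lines.

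For sufficiency, I will argue by contradiction: assume (\ref{input}) holds but EDF misses the deadline of some job $j^*$ at time $t=d_{j^*}$, and I pick $j^*$ so that $t$ is the earliest such deadline miss. The key construction is to define $\ell$ as the latest time before $t$ at which the machine is either idle under EDF or busy executing a job with deadline strictly greater than $t$ (set $\ell=0$ if no such time exists). Throughout $(\ell,t)$ the machine is then continuously busy on jobs whose deadlines are $\le t$. The main step I have to execute carefully is to show that every job processed by EDF during $(\ell,t)$ in fact has release time $\ge\ell$: if some job $k$ with $a_k<\ell$ and $d_k\le t$ were processed in $(\ell,t)$, then at the instant just after $\ell$ the job $k$ was already pending, and by the EDF rule the machine would not have been idle or running a later-deadline job at time $\ell$, contradicting the definition of $\ell$.

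Once this is established, the set $T$ of jobs executed by EDF in $(\ell,t)$ all satisfy $\ell\le a_j$ and $d_j\le t$. Since EDF is busy for the entire length $t-\ell$ on jobs from $T$ and still fails to finish $j^*$, the aggregate execution time of jobs with $\ell\le a_j$ and $d_j\le t$ strictly exceeds $t-\ell$, contradicting (\ref{input}) applied to the interval $(\ell,t)$. I expect the bookkeeping around the definition of $\ell$ and the claim that no early-released job is executed during $(\ell,t)$ to be the main obstacle, since it is the only place where the EDF selection rule is used in an essential way; everything else is a volume comparison. After that, feasibility of EDF plus the standard optimality of EDF among feasible schedules yields the lemma.
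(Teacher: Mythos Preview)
Your proposal is correct and is precisely the classical busy-interval argument for EDF feasibility. Note, however, that the paper does not actually prove Lemma~\ref{assmpution_chen}: it is stated as a known characterization due to Chetto et al.\ and is simply cited, so there is no ``paper's own proof'' to compare against. Your argument supplies what the paper omits, and it matches the standard textbook proof (necessity by a volume bound, sufficiency by taking the maximal busy interval on deadline-${\le}t$ jobs preceding the first missed deadline). One cosmetic point: the final sentence invoking ``optimality of EDF among feasible schedules'' is unnecessary here, since the lemma as stated is really about feasibility; once you have shown EDF meets every deadline under condition~(\ref{input}), you are done.
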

The condition is equivalent to the {\em earliest-deadline-first (EDF)} schedulability~\cite{yao1995scheduling}. 
Here, we also remark that a feasible solution can be a preemptive schedule 
but the jobs can only be executed without migration. 
Chen et al.~\cite{chen2015online}
gave an upper bound of 4 and a lower bound of 2.06 for the competitive ratio of this dual-machine problem.
It is a significant contribution for online dynamic power management problems, but 
unfortunately, the above gap is not very small, for which there has been no improvement up to now.

\longdelete{
Their contribution to the online community is not small at all
for dynamic power management problems.
As mentioned before, the model covers a wide range of real-world problems and the fact that more machines consume less energy sounds quite nontrivial.
Unfortunately, the above gap is not very small, 
for which there has been no improvement up to now.}

\longdelete{
Now each job $j$ is given as $(a_j,d_j,c_j)$, 
where $a_j$ denotes the arrival time, 
$d_j$ the deadline and $c_j$ the execution time.
Fix an arbitrary online algorithm $A$.
Then the adversary lets the turn-on cost $E=10$ and the idle cost $\psi_\sigma=1$, 
and gives $j_1=(0,100,1)$. 
If $A$ does not execute it for $\epsilon$ minutes,
the adversary would release $j_0=(0,100,99-\epsilon/2)$ and $A$ cannot
finish $j_0$ since $100-\epsilon$ minutes left after the delay are not enough
to execute the two jobs whose total amount of execution is $1+99-\epsilon/2$.
However, OPT can by executing $j_1$ exactly when it arrives. ($A$ is
deterministic, so the adversary can figure out $A$'s actions in advance
without looking at $A$'s real actions.)
Thus $A$ must be designed to execute $j_1$ immediately. 
Then the adversary gives $j_2=(10,100,1)$, $j_3=(20,100,1)$, $\ldots$, and $j_{10}=(90,100,1)$.
$A$ must execute all of them immediately once they arrive,
but OPT performs each of them as late as possible, i.e. almost at the end of their deadline. 
One can easily see that $A$ consumes roughly 100 if it keeps ON throughout and OPT 10. 
(It is not hard to observe that $A$ would consume even more energy if
it turns off and turns on the machine somewhere.)
The CR is now 10 and it can be worse unlimitedly by modifying the parameters. 
Thus there is no online
algorithm with a bounded CR.

In this study, we thus investigate the online model, introduced by Chen, Kao, Lee, Rutter and Wagner~\cite{chen2015online},}

\longdelete{
The previous adversary is no longer that scary for an online algorithm $A$. 
For instance, $A$ can simply delay $j_1=(0,100,1)$ as much as possible, 
and when $j_0=(0,100,99-\epsilon/2)$ arrives,
it turns on the second machine, $M_2$, to execute it immediately, 
and $j_1$ is executed on $M_1$, 
resulting in almost the same energy consumption as OPT, 
excluding the turn-on cost of $M_2$.
One can observe that $A$ also spends almost the same energy as OPT against $j_1, j_2, \ldots$. 
Thus an online algorithm would achieve almost the same performance as OPT at least against such a simple input instance.
However, this perfect case does not appear to hold generally.}

\medskip
{\bf Our Contribution.}  
This paper improves both upper and lower bounds to 3 and 2.1, respectively, for exactly the same dual-machine model.
Namely the competitive ratio gap between lower and upper bounds is improved from 1.94 in~\cite{chen2015online} to 0.9.
Our algorithm has the same basic structure as the one in~\cite{chen2015online}, 
but two critical differences, 
which exactly contributes the improvement, 
should deserve being mentioned. 
First, we delay jobs but turn on a machine earlier than its due time by a margin instead of ``\emph{energy-efficient anchor}'' introduced in~\cite{chen2015online}.
Second, 
our idle time after the machine is finished with its execution is not set to $\mathcal{B}=E / \psi_\sigma$, i.e. 
the so-called \emph{break-even time},
but set to twice that value. 
Note that the break-even time has always been used for the class of similar problems including the famous ski-rental problem, 
as mentioned earlier. 
We hope this escape from the common tradition will help on several different occasions in the future.

Moreover,
we use a standard math induction for the analysis, 
making our analysis significantly simpler than that of~\cite{irani2007algorithms, chen2015online}, 
which will also contribute to further improvement of the bounds hopefully.
In~\cite{irani2007algorithms, chen2015online}, 
they made their competitive analysis by introducing what they call a ``sleep interval''
where the optimal offline schedule is OFF and an ``awaken-interval'' where their online algorithm is ON, 
and gave a key lemma saying a single sleep interval overlaps with at most two awaken intervals. This is clever since the analysis boils down to comparing a single (consecutively ON) interval of the online algorithm and that of the optimal offline schedule. 
However, a single such interval of the optimal offline schedule can still contribute to two such intervals of the online algorithm even
if it can actually contribute to only one in many cases. 
Thus the analysis underestimates the cost of the optimal offline schedule.
Our analysis splits the time line simply into ``\emph{phases}'' that realizes the same intervals for an online algorithm and the optimal offline schedule,
which makes it possible to use the standard math induction. 
Of course an online algorithm and the optimal offline schedule can execute different jobs in each phase, 
but that can be managed by considering ``assets'' and ``liabilities'' of jobs between phases.
More details of the basic idea will be given in Sections~\ref{UB} and~\ref{analysis}.

\longdelete{
We remark that in~\cite{irani2007algorithms, chen2015online},
they showed the competitive analysis by considering the \emph{time intervals} in which 
the status of the optimal offline schedule is OFF but 
the status of their online algorithms is ON. 
The analysis may require a slightly larger bound on the actual consumption
due to the definition of time intervals (which will be introduced in the appendix).
Inspired by the design of their time
intervals~\cite{irani2007algorithms, chen2015online}, we consider the time line in a different manner. More precisely, we change to split the time line of our online algorithm separately into \emph{phases} and compare the consumption difference against the corresponding OPT within each single phase.\ccc
}

\medskip
{\bf Other Studies.} 
In comparison with a few amount of research on \emph{pure} dynamic power management, there have been relatively more studies which incorporate speed scaling into the power-down mechanism. 
Albers et al.~\cite{albers2014race} investigated the offline setting of speed scaling with a sleep state and presented a $\frac{4}{3}$-factor approximation algorithm.
Antoniadis et al.~\cite{antoniadis2015fully} further improved the result to a fully-polynomial time approximation scheme.
Considering the multi-machine systems, Demaine et al.~\cite{demaine2013scheduling} developed a polynomial-time algorithm based on dynamic programming.
Albers et al.~\cite{albers2015multi} then considered dynamic speed scaling with job migration.
Very recently, Antoniadis et al.~\cite{antoniadis2020parallel} presented a pseudo-polynomial time algorithm for a single machine based on a linear programming relaxation and a constant-factor approximation algorithm for the case of multiple machines.
Readers may refer to 
the survey works~\cite{benini2000survey,irani2005algorithmic,wierman2012speed} for more details. 

\longdelete{
One may consider the aforementioned model to be a variation of the dynamic power management problem 
which aims at determining when to transition a machine between the states:
active and idle (ON) and sleep (OFF) to finish all input jobs 
such that the energy consumption is minimized. 
There have been actually not many studies in the literature for its online model which can be transformed into the continuous version of the ski-rental problem~\cite{10.5555/241938.241951}, 
if there is only a single machine with two states. 
Irani et al.~\cite{irani2003online,irani2007algorithms} initiated the study of combining the above power-down mechanism with \emph{dynamic speed scaling}, 
where the latter technique has been widely explored in the past decades~\cite{yao1995scheduling,bansal2007speed,chen2007procrastination,chen2006procrastination,irani2007algorithms}.
The basic concept of dynamic speed scaling is that a machine's processing speed can be adjusted dynamically, and the power consumption rate is usually represented by a convex function of the processing speed in terms of time expense.
They proposed the first online algorithm with a constant CR of $\max \{c_1c_2+c_1+2,4\}$, 
where $c_1$ and $c_2$ are some constant parameters in a given convex power function~\cite{irani2007algorithms}. 
For example, if the power function is quadratic, the upper bound of CR is~8.
}

\section{Upper Bound}
\label{UB}

We formally introduce 
the dual-machine model proposed by Chen et al.~\cite{chen2015online}.
Each machine requires a constant amount of energy to switch its state from {\em sleep} to either {\em busy} or {\em idle}, denoted by $E$.
Suppose 
every machine uses a constant speed to execute jobs; that is, 
it consumes a constant amount of energy per unit of time,
denoted by $\psi_b$ when it is busy and $\psi_\sigma$ when it is idle.
The break-even time, 
as mentioned earlier, 
denoted by $\mathcal{B}$,
is defined to be $E / \psi_\sigma$, 
which represents that
when a machine is idle for $\mathcal{B}$ units of time,
the energy consumption is equal to $E$,
i.e. the energy consumption for switching on a machine from sleep to the other states.
The standard assumption is that $\psi_\sigma \leq \psi_b$.
Another assumption in this study is that $E=\psi_b=1$ (and $\ps\leq 1$). 
If $E$ is a positive integer $k$, we can use our model by changing a
unit time from 1 to $1/k$ (and changing the job and idle length
accordingly). Thus our model does not lose any generality and this
setting contributes to significantly simplified expositions.\ccc
Recall that our input always satisfies Condition~(\ref{input}) in this model, 
where a feasible schedule allows job preemption but no migration.

Before introducing our online algorithm, 
we have some more notation.
The dual-machine model has two machines: 
one of them is called $M_P$ (a {\em primary machine}) and the other $M_S$ (a {\em secondary machine}).
We basically use the primary machine as far as possible and turn on the secondary machine if necessary.
That is, we set $M_P$ to be the main machine, 
and $M_S$ to be the backup machine.
When a job arrives,
we always turn on $M_P$ first and decide to turn on $M_S$ only if it finds out that $M_P$ is not able to finish all the pending jobs before their deadlines;
i.e., $M_P$ is overloaded.
In other words, $M_P$ and $M_S$ are logical names and the two (physical) machines may swap their names in our algorithm.
Let $Q_{M_P}$ and $Q_{M_S}$ denote the job queues for $M_P$ and $M_S$, respectively, 
which are already scheduled using EDF; 
namely, the jobs in $Q_{M_P}$ and $Q_{M_S}$ are 
being executed continuously on $M_P$ and $M_S$, respectively. 
We use $Q$ to denote a queue for jobs not yet scheduled. 
If a job in $Q_{M_P}$ or $Q_{M_S}$ is finished, 
it is removed from the queue.
Also, let $c'_j(t)$ denote the remaining execution time of job $j$ at time $t$,
and $W(t,t^\dagger)=\sum_{j\in Q(t,t^\dagger)}c'_j(t)$ denote the total remaining execution time in time interval $(t,t^\dagger)$,
where $Q(t,t^\dagger)$ denotes the subset of jobs that have not yet finished their execution up to time $t$ while their deadlines are not larger than $t^\dagger$, where~$t^\dagger > t$.
When a new job $j'$, given by $(a_{j'},d_{j'},c_{j'})$, is arriving, 
there are two cases. 
One is that the current schedule for $Q_{M_P}$ can accommodate its execution. 
If so, $j'$ is inserted into $Q_{M_P}$,
which is rescheduled by EDF (we say $M_P$ is available).
The other case is that there is no room for $j'$ in $Q_{M_P}$
(if jobs in $Q_{M_P}$ would not have been delayed, 
this cannot happen because of the single machine schedulability
assumption).
We call such a $j'$ {\em urgent}.
If an urgent job comes, then $M_S$ is turned on and $j'$ is executed on $M_S$ immediately 
(where $M_P$ continues executing the already assigned jobs). 
We then swap $M_P$ and $M_S$ at this point of time, which means the original $M_S$ can act as a new primary machine. 
Table~\ref{algo} shows our algorithm, denoted by \oural. 
Note that in the following, 
we use ALG to denote an online algorithm, OPT an offline optimal scheduler and CR a competitive ratio for simplicity.

\begin{table}[h]
\caption{Algorithm \oural}
\label{algo}
\vspace{-18pt}
\begin{center}
\begin{tabular}{l}
\hline
\noalign{\smallskip}
    Initially assign\ccc $M_P$ and $M_S$ to two machines arbitrarily.\\
    The input satisfies Condition~\ref{input} of Lemma~\ref{assmpution_chen}.\\
    At any time $t$, \oural\ proceeds as follows:\\
\noalign{\smallskip}
\hline
\noalign{\smallskip}
\begin{minipage}{5.5in}
    \vskip 2pt
    \begin{enumerate}
      \item Execution of jobs:
      \begin{enumerate}
        \item No machines are on.\\ 
        If there exist $t^{\dagger}$ and $t^{\ast}$ such that $W(t^{\dagger},t^{\ast})\geq t^{\ast}-t^{\dagger}$ and $t\geq t^{\dagger}-u$, turn on one machine ($M_P$) and move all jobs in $Q$ to $Q_{M_P}$.
        \item Only $M_P$ is on and a new job $j$ comes.\\
        If there is no $t^{\ast}$ such that $W(t,t^{\ast})>
        t^{\ast}-t$ for jobs in $Q_{M_P} \cup \{j\}$ (i.e., $M_P$ is available), add $j$ to $Q_{M_P}$ and reschedule it.\\
        Otherwise, turn on the other machine ($M_S$) and add $j$
        to $Q_{M_S}$ that is empty.\\
        After that switch $M_P$ and $M_S$.
	    \item $M_S$ is on and a new job $j$ comes.\\
    	If $M_S$ is available, add $j$ into $Q_{M_S}$ and reschedule it.\\
        Otherwise move it to $Q_{M_P}$ (it is guaranteed that $M_P$ is available).
      \end{enumerate}
      \item Idle state:
      \begin{enumerate}
        \item $M_S$ never has an idle state.\\
        That is, we immediately turn off $M_S$ once $Q_{M_S}$ becomes empty.
        \item $M_P$ becomes idle once $Q_{M_P}$ becomes empty.\\ 
        We turn off $M_P$ when its total idle time becomes $2/\ps$.\\
      \end{enumerate}
    \end{enumerate}
    \vskip 2pt
\end{minipage}
\vspace{-5pt}
\\
\hline
\end{tabular}
\end{center}
\end{table}

\longdelete{
\subsection{Algorithm $\textbf{S}$~\cite{chen2015online} Revisited}

Chen et al.~\cite{chen2015online} presented an online algorithm, 
called \emph{Algorithm $\textbf{S}$}.
Based on the previous assumption that a single machine can finish all the input jobs in the offline model,
Algorithm $\textbf{S}$ basically uses one machine as far as possible and uses the second process if necessary.
That is, they set $M_1$ to be the main machine, 
and $M_2$ to be the backup machine.
When a job arrives,
Algorithm $\textbf{S}$ always turn on the main machine $M_1$ first and decides to turn on the backup machine $M_2$ only if it finds out that $M_1$ is not able to finish all the pending jobs before their deadlines;
i.e., $M_1$ is overloaded.
More precisely, in order to determine when the main machine $M_1$ should be switched on for a pending job $j$,
let $j$ be associated with a parameter $h_j = \max\{a_j, d_j- \lambda\mathcal{B}\}$,
where $\lambda$ is a constant (setting to be one in~\cite{chen2015online}).
The parameter $h_j$ is referred to be the \emph{energy-efficient anchor} for the job $j$.

The key steps of Algorithm $\textbf{S}$ are summarized as follows.
When a new job $j$ arrives and it turns out that $M_1$ cannot finish all the pending jobs before their deadlines if $j$ is assigned to $M_1$,
then switch the state of $M_2$ to ON.
Let the current time be $t^*$ and schedule the jobs that arrived before $t^*$ on $M_1$ based on the earliest-deadline-first (EDF) principle,
and assign the jobs that arrive after $t^*$ to $M_2$.
If both processors $M_1$ and $M_2$ are ON, then Algorithm $\textbf{S}$ decides to turn off $M_1$ at the moment when it becomes idle.
If only one of $M_1$ and $M_2$ is ON, say $M_2$, then when $M_2$ becomes idle, it is switched off once the length between the current time and the time when $M_1$ is turned on has reached $\mathcal{B}$.

\begin{figure}[htb]
\centering
\includegraphics[scale=0.65]{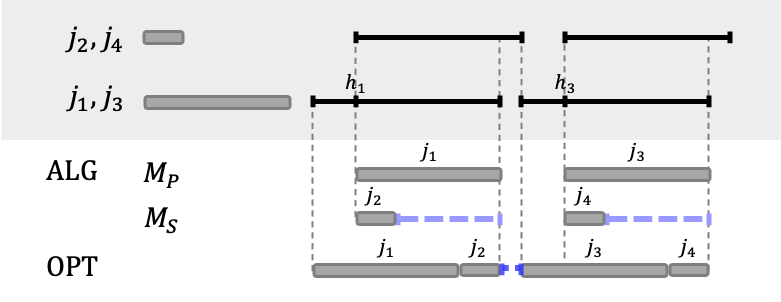}
\caption{tight example}
\label{fig:tight}
\end{figure}

Chen et al. proved Algorithm~$\textbf{S}$ is 4-competitive when $\lambda = 1$.
Here we give an example which shows that the competitive analysis is tight in~\cite{chen2015online}.
That is, the competitive ratio for the tight example is 4. 
We set $\psi_b = \psi_\sigma = 1$, and $E = \mathcal{B}=k$,
where $k$ is a sufficiently large value.
Since it has been proved that the competitive ratio for Algorithm $\textbf{S}$ is 4 when $\lambda = 1$,
the $h_j$ for job $j$ is set to be $\max\{a_j, d_j-\mathcal{B}\}$.
Assume $c_1=c_3=\mathcal{B}$ and $c_2=c_4=\epsilon$.
Algorithm $\textbf{S}$ turns on $M_1$ at $h_1$.
Since $j_2$ arrives at the same time, i.e. $a_2 = h_1$,
obviously,
$M_1$ cannot finish $j_2$ before its deadline.
Therefore, $M_2$ is switched on to execute $j_2$.
According to the strategy of Algorithm~$\textbf{S}$,
$M_2$ remains idle until processor $M_1$ keeps ON for $\mathcal{B}$.
Thus, $M_1$ and $M_2$ are switched off at the same time.
For jobs $j_3$ and $j_4$, we design the same scenario to consider the performance.
As a result, the total energy cost is $4E+4\mathcal{B}-2\epsilon=8k-2\epsilon$.
By contrast,
in the offline optimal solution, 
it turns on $M_1$ at $a_1$ and keep $M_1$ ON until all the input jobs are finished.
The energy cost is $E+2\mathcal{B}+2\epsilon+\epsilon'=3k+2\epsilon+\epsilon'$.

We can see that, for any $n$, if we let $r=n+1$, the energy cost for Algorithm~$\textbf{S}$ becomes $(4k-\epsilon)\cdot r$,
and the energy cost for the offline optimal schedule is $k + r(k+\epsilon)+(r-1)\cdot \epsilon'$.
Let the values of $\epsilon$ and $\epsilon'$ be sufficiently small so that
the two parameters can be ignored. The ratio becomes
\begin{align}
\frac{4kr}{k+kr}=\frac{k\cdot 4r}{k\cdot (r+1)}=\frac{4r}{r+1}
\end{align}
Let $r$ be a number that is sufficiently large, then the ratio is close to 4.
Thus, we have proved that the analysis of Algorithm~$\textbf{S}$ is tight.}

\longdelete{
\subsection{Easy Algorithm}

As described in Sec.~\ref{intro}, 
the key idea to escape from the malicious adversary is to delay incoming jobs as much as possible 
and to use the secondary machine for a job that cannot be scheduled because of this delay.
Let $A_0$ be a naive implementation of this idea.
Let $t$ denote the current time. 
Suppose both machines are OFF initially and requests, $j_1, j_2, \ldots$, are coming. 
Once they go to $Q$, $A_0$ delays their execution as far as possible. 
If some job, $j=(a_j,d_j,c_j)$
or maybe some set of jobs, 
reaches its due time, i.e., $d_j-t=c_j$ holds, 
then $A_0$ moves $j$ and all the other jobs in $Q$ to $Q_{M_P}$ even though some of them have more room for delay. 
Next, 
all of them are scheduled by EDF and executed continuously on $M_P$. 
If~$Q_{M_P}$ becomes empty, 
$M_P$ becomes idle for some designated time and is turned off when this idle time expires. 
Now the system returns to the initial state; 
namely both machines are OFF.

Look at Fig.~\ref{fig:up1}.
Here, we use a solid line segment from time $a_j$ to $d_j$ for each job $j=(a_j,d_j,c_j)$,
and a gray box to represent its execution time $c_j$. 
Below those line segments and boxes, 
we illustrate how those jobs are executed by $A_0$ and OPT. 
Dashed line segments show the idle time of $A_0$ and dotted line segments the idle time of OPT (if any).
We will come back to this figure later, so at this moment,
see only $j_1$ through $j_{10}$, $j_{11}$ and $j_{12}$, 
and ignore all the other inputs that will be later discussed. 
$j_1$ through $j_{10}$ have a late deadline and once enter the global queue $Q$. 
$j_{11}$ is also delayed, 
but its due time soon comes and is executed on the primary machine $M_P$ by turning it on.
Then $j_1$ through $j_{10}$ in $Q$ are also executed continuously
after $j_{11}$ on $M_P$.
(This seems natural in the sense that those jobs can be executed with absolutely minimum cost).

\begin{figure}[htb]
\centering
\includegraphics[scale=0.65]{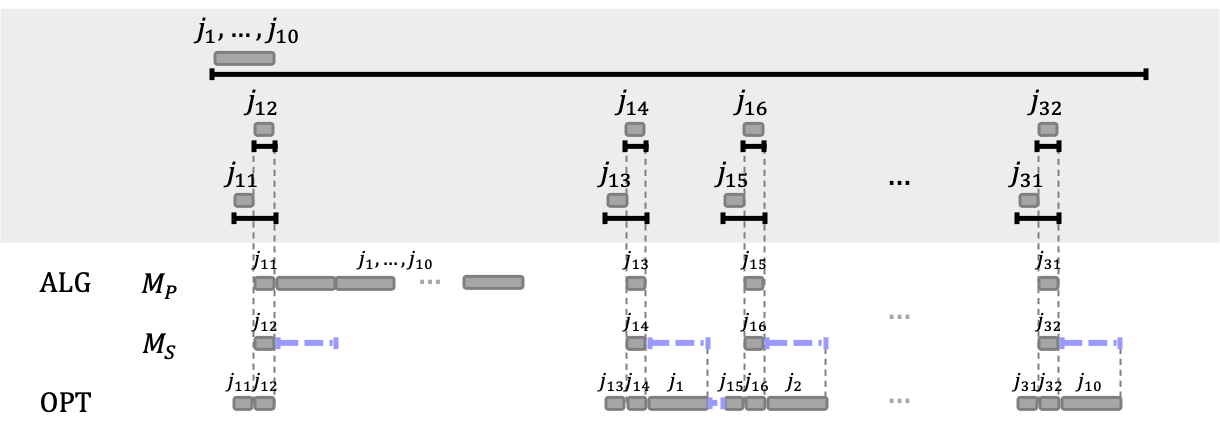}
\caption{Example 1. Note that the assignment of $M_P$ and $M_S$ is original. They are swapped whenever $M_S$ is newly turned on. That is, in this illustration, the machine having idle time turns out to be $M_P$ at that time. The amount of idle time is one.}
\label{fig:up1}
\end{figure}

During the above execution, a new job, $j'$, may arrive and there are two cases.  
One is that the current schedule for $Q_{M_P}$ can accommodate its execution. 
If so, $j'$ is inserted into $Q_{M_P}$,
which is rescheduled by EDF (we say $M_P$ is available).
The other case is that there is no room for $j'$ in $Q_{M_P}$
(if jobs in $Q_{M_P}$ would not have been delayed, 
this cannot happen because of the single machine schedulability
assumption).
We call such a $j'$ {\em urgent}.
In Fig.~\ref{fig:up1}, $j_{12}$ is urgent
(but if $j_{12}$'s deadline would be a bit later, 
it could be inserted after $j_{11}$ on $M_P$).
If an urgent job comes, then the secondary machine, $M_S$, is turned on, 
on which $j'$ is executed immediately ($M_P$ continues executing the already assigned jobs). 
Notice that when $j_1$ through $j_{12}$ have been assigned and their execution has started on $M_P$ and $M_S$, 
there are no pending jobs at all. 
Namely we can consider a new game has started. 
Furthermore, $M_S$ can act as a new primary machine whose first job is $j_{12}$ 
($j_{12}$ and all the succeeding jobs obviously satisfy Condition~\ref{input} of Lemma~\ref{assmpution_chen}). 
Therefore, we swap $M_P$ and $M_S$ at this point of time although we need a caution since the new $M_S$ may be busy for a while 
(details will be discussed later).
In the figure, we show the original $M_P$ and $M_S$, but this swap occurs whenever $M_S$ changes from OFF to ON.

What about the CR of $A_0$?
For simplicity, we assume $\ps=\psi_b$ ($=1$) for a while.
See Fig.~\ref{fig:up2-4} (a).\ccc
Let $j_1$ be a tiny job; i.e., $c_1$ is very small. 
$j_1$ is delayed and another tiny job, 
$j_2$ having the same deadline, 
comes after the execution of $j_1$ has been started 
(we will say simply ``$j_1$ is started''). 
Obviously $M_P$ is not available for $j_2$ and $M_S$ is turned on. 
Recall that $M_P$ and $M_S$ are swapped. 
After finishing $j_2$, 
$M_P$ enters an idle state to prepare for the next job that may come soon and 
suppose the amount of this idle state would be 1.
It should be noted that this 1 ($=\mathcal{B}=E/\ps$,
recall we are temporarily assuming $\ps=1$)
is a popular setting for the idle time that has been proved optimal in similar situations including the ski-rental problem.
$A_0$ spends two turn-on costs and one idle cost,
namely 2+1=3 ignoring the tiny execution costs. 
OPT spends only one turn-on cost, which is 1. 
So the CR is 3. So far so good.


\begin{figure}[htb]
\centering
\includegraphics[scale=0.55]{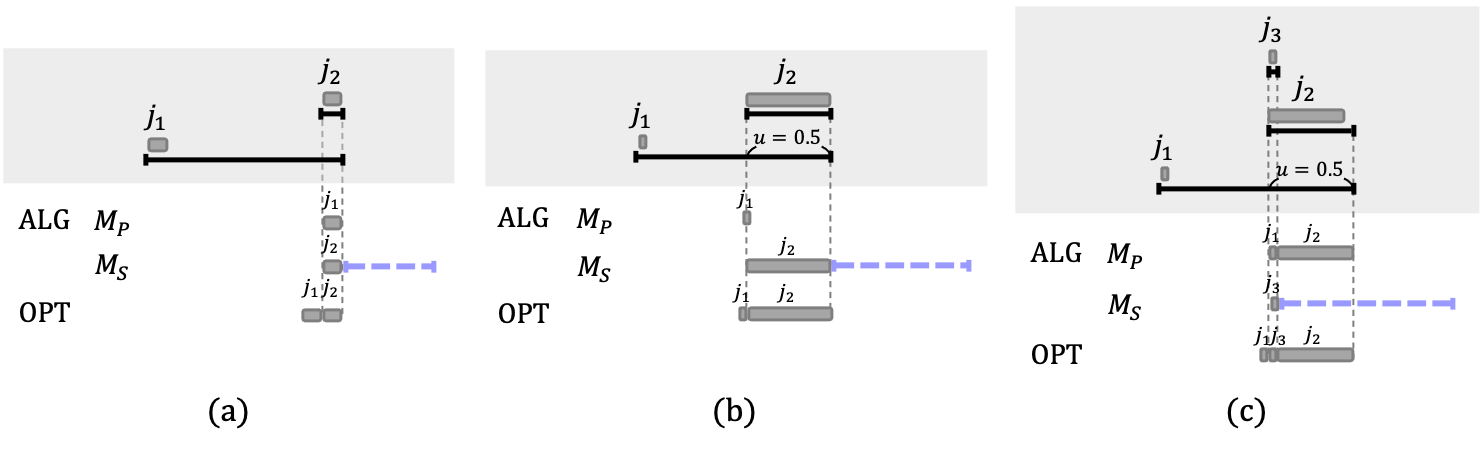}
\caption{(a) Example 2, where the idle time of new $M_P$ is one ($M_P$ and $M_S$ are swapped at~$a_2$); (b) Example 3, where the idle time of new $M_P$ is set to two ($M_P$ and $M_S$ are swapped at~$a_2$); (c)~Example 4, where the idle time of new $M_P$ is two ($M_P$ and $M_S$ are swapped at~$a_3$)}
\label{fig:up2-4}
\end{figure}

Unfortunately we have a problematic case (see Fig.~\ref{fig:up1} again).
Recall that all $j_1$ through $j_{10}$ are executed together with
$j_{11}$ on $M_P$.
However, OPT does not do so now, which is possible since those jobs have a late deadline.
Subsequently
we have another ten pairs of two jobs, 
i.e., $j_{13}$, $j_{14}$ and so on,
each of which is similar to the pair of $j_{11}$ and $j_{12}$.
Exactly as before, we need to turn on both
machines for each of the ten pairs. 
(Also we need to swap $M_P$ and $M_S$ each time, but this is not important here.)
Note that every consecutive two sets have a tiny gap, 
so we need to turn on $M_P$ each time. 
OPT simply keeps being ON during this period since it is better (not worse at least) than inserting turn-off and turn-on in the middle. 
The total cost of $A_0$ is $1+1+1+10+10\times(1+1+1)=43$
(where three 1's are two turn-on costs and one idle cost),
whereas the cost of OPT is $1+1+10\times1=12$, 
by ignoring a tiny idle time between every two pairs of jobs. 
The trick is that OPT can execute the pending $j_1$ through $j_{10}$ using the time slots that $A_0$ is in an idle state without any additional cost.
Thus the CR is much worse than 3 as calculated above. 
Actually, we can increase the number of such pairs; 
one can easily see the CR approaches to 4.
Now we fix this problem.}

Algorithm~\oural\ has two key gadgets. 
First, we present 
a new notion of ``\emph{margin}'',
as a positive constant $u$, 
for delaying a job. 
That is, 
we start a job earlier than its due time by~$u$. 
Fig.~\ref{fig:margin}
shows examples to illustrate the idea. 
Here, we use a solid line segment from time $a_j$ to $d_j$ for each job $j=(a_j,d_j,c_j)$,
and a gray box to represent its execution time $c_j$. 
Below those line segments and boxes, 
we illustrate how those jobs are executed by ALG and OPT. 
Dashed line segments show the idle time of ALG and dotted line segments the idle time of OPT (if any).

\begin{figure}[htb]
\centering
\includegraphics[scale=0.65]{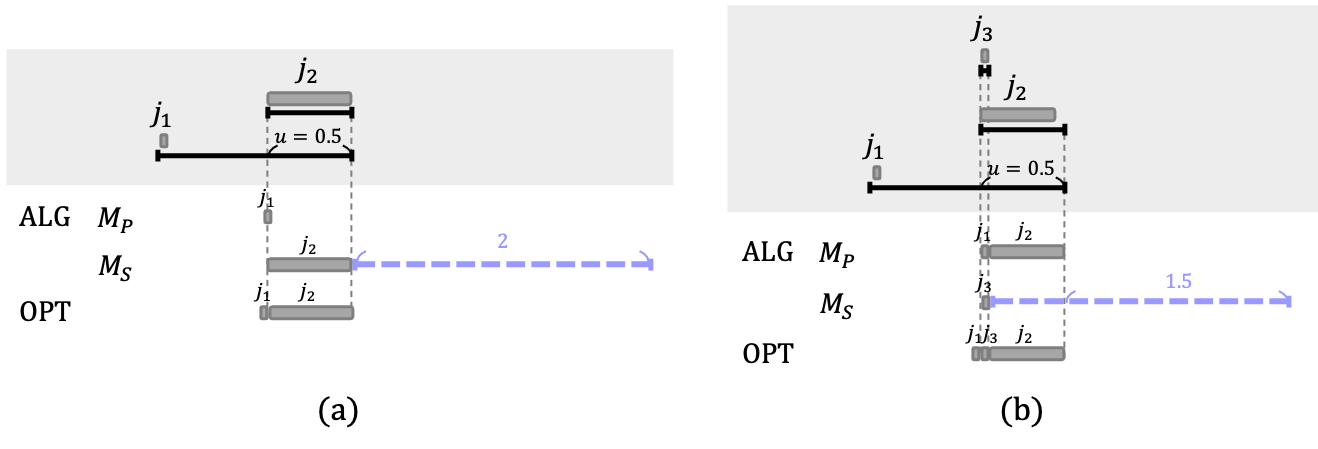}
\caption{Two examples for illustrating the notion of margin}
\label{fig:margin}
\end{figure}

As shown in Fig.~\ref{fig:margin} (a), 
$j_1$,
having a tiny execution time of $\epsilon$,
starts at $d_1-u -\epsilon$.
Here we let $u=0.5$ to get close to the optimal setting
although its perfect setting seems hard.
(We will show more details about the margin setting in the next section.)
Thus, it is necessary that job $j_2$ that provokes $M_S$'s turn-on has an execution time of at least 0.5 
(otherwise it can interrupt $j_1$ and be inserted into $M_P$).
See Fig.~\ref{fig:margin} (b) for a more complicated scenario.
Now $j_2$ is a bit smaller than the one in (a) and can be inserted into the (old) $M_P$'s execution gap caused by the margin $u$.
Then a tiny $j_3$, which is urgent, makes the secondary machine ON,
but its idle time can start 0.5 earlier than before.
Note that the idle time of a machine is typically set to $\mathcal{B}$ after the last job is finished,
where $\mathcal{B}=E/\ps=1$ since we are temporarily assuming $\ps=1$.
It is actually a popular setting for the idle time that has been proved optimal in similar situations including the ski-rental problem.
In contrast, 
the second key gadget of Algorithm~\oural\ is that we set this value to twice, giving rise to that OPT has an interval of 1.5 from the end of $j_2$ (i.e. $d_2$) to the end of the idle state of $M_P$. 
On the other hand,
recall that OPT has an interval of 2 in the previous scenario 
(see Fig.~\ref{fig:margin} (a)).  
One can observe that OPT could use this interval to execute the pending jobs and thus the longer the better for this interval.
However, we have to use 1.5 instead of 2 for the analysis in Section~\ref{analysis}.

\longdelete{
Now the cost of \oural\ is $1+1+0.5+x$ 
(including two turn-on costs and an idle cost of $x$) 
and that of OPT $1+0.5$ (including one turn-on cost).
We can thus increase the idle time $x$ from 1 to 2 to achieve the same CR of 3 for this small example. %
Why does this increased idle time help?
Because we need only five pairs of jobs $j_i$ and $j_{i+1}$ to realize the same situation as shown in Fig.~\ref{fig:up1};
the turn-on cost of ALG halves while the cost of OPT seldom changes.

\begin{figure}[htb]
\centering
\includegraphics[scale=0.7]{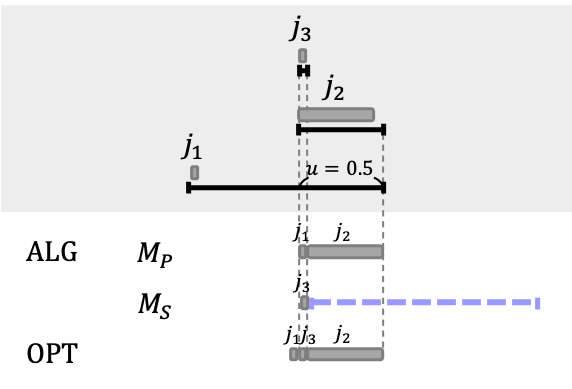}
\caption{Example 4, where the idle time of new $M_P$ is two ($M_P$ and $M_S$ are swapped at~$a_{j_2}=a_{j_3}$)}
\label{fig:up4}
\end{figure}}

\longdelete{
In summary, there are two scenarios to be considered.
One is that new $M_P$ finishes its execution later than new $M_S$ (Fig.~\ref{fig:margin} (a)), and the other is that new $M_P$ finishes the execution first since the job that provokes new $M_P$ ON is tiny, resulting in the idle time of new $M_P$ is actually overlapped with the execution of the new $M_S$ (Fig.~\ref{fig:margin} (b)).}

\medskip
{\bf Tight Analysis of Algorithm $\textbf{S}$~\cite{chen2015online}.}
The two gadgets of Algorithm~\oural\ reveal the improvement of the upper bound in some sense. 
Here we recall  
Chen et al.'s 4-competitive algorithm, 
called \emph{Algorithm $\textbf{S}$}, 
and conduct tight analysis of their algorithm 
to clarify the merit of our proposed gadgets.  
That is, we present a tight example to show its worst competitive performance.

\begin{figure}[htb]
\centering
\includegraphics[scale=0.65]{tightcase.png}
\caption{Tight example for Algorithm $\textbf{S}$~\cite{chen2015online}}
\label{fig:tight}
\end{figure}

\longdelete{
Based on the previous assumption that a single machine can finish all the input jobs in the offline model,
Algorithm $\textbf{S}$ basically uses one machine as far as possible and uses the second process if necessary.
That is, they set $M_1$ to be the main machine, 
and $M_2$ to be the backup machine.
When a job arrives,
Algorithm $\textbf{S}$ always turn on the main machine $M_1$ first and decides to turn on the backup machine $M_2$ only if it finds out that $M_1$ is not able to finish all the pending jobs before their deadlines;
i.e., $M_1$ is overloaded.

More precisely,}

Basically, 
the main structure of Algorithm $\textbf{S}$ is similar, 
but they let every job 
$j$ be associated with a parameter $h_j = \max\{a_j, d_j- \lambda\mathcal{B}\}$,
i.e. its \emph{energy-efficient anchor}, 
to determine when the main machine $M_P$ should be switched on for a pending job $j$, 
where $\lambda$ is a constant (setting to be one in~\cite{chen2015online}).
Set $\psi_b = \psi_\sigma = 1$, and $E = \mathcal{B}=k$,
where $k$ is a sufficiently large value, for simplicity.
As shown in Fig.~\ref{fig:tight}, we let 
$c_1=c_3=\mathcal{B}$ and $c_2=c_4=\epsilon$.
Algorithm $\textbf{S}$ turns on $M_P$ at $h_1$.
Since $j_2$ arrives at the same time, i.e. $a_2 = h_1$,
obviously,
$M_P$ cannot finish $j_2$ before its deadline.
Therefore, $M_S$ is switched on to execute $j_2$.
One can observe that the design of the margin in Algorithm \oural\ can provide the flexibility to escape from the scenario.

\longdelete{
The key steps of Algorithm $\textbf{S}$ are summarized as follows.
When a new job $j$ arrives and it turns out that $M_1$ cannot finish all the pending jobs before their deadlines if $j$ is assigned to $M_1$,
then switch the state of $M_2$ to ON.
Let the current time be $t^*$ and schedule the jobs that arrived before $t^*$ on $M_1$ based on the earliest-deadline-first (EDF) principle,
and assign the jobs that arrive after $t^*$ to $M_2$.
If both processors $M_1$ and $M_2$ are ON, then Algorithm $\textbf{S}$ decides to turn off $M_1$ at the moment when it becomes idle.
If only one of $M_1$ and $M_2$ is ON, say $M_2$, then when $M_2$ becomes idle, it is switched off once the length between the current time and the time when $M_1$ is turned on has reached $\mathcal{B}$.

Chen et al. proved Algorithm~$\textbf{S}$ is 4-competitive.
when $\lambda = 1$.
Here we give an example which shows that the competitive analysis is tight in~\cite{chen2015online}.
That is, the competitive ratio for the tight example is 4.}

Next, 
the turn-off strategy of Algorithm~$\textbf{S}$ is as follows: 
if both machines are ON, $M_P$ (new $M_S$) is turned off 
as soon as it becomes idle.
Otherwise, if only one of the machines is ON, 
then when the machine becomes idle, it is switched off once the length between the current time and the moment when $M_P$ was turned on has reached $\mathcal{B}$.
In this example,
$M_S$ remains idle until 
$M_P$ keeps ON for $\mathcal{B}$ units of time,
resulting in $M_P$ and $M_S$ being switched off at the same time.
One can observe that 
if the job executed on $M_S$ is tiny, 
the malicious adversary can punish the turn-off strategy 
since it turns off $M_S$ too early. 
We design the same worst-scenario
for jobs $j_3$ and $j_4$ and consider the performance.
As a result, the total energy cost of ALG is $4E+4\mathcal{B}-2\epsilon=(4k-\epsilon)\cdot 2$.
By contrast,
in the offline optimal solution, 
it turns on a single machine at $a_1$ and keep it ON until all the input jobs are finished.
The minimum energy cost is $E+2\mathcal{B}+2\epsilon+\epsilon'=k + 2(k+\epsilon)+\epsilon'$.

It is not hard to see that, 
if we consider the above case of four jobs to be one round, 
and the energy cost of Algorithm~$\textbf{S}$ for $\frac{r}{2}$ rounds, 
where $r$ is a multiple of 2,
is $(4k-\epsilon)\cdot r$,
while the energy cost of OPT is 
$k + r(k+\epsilon)+(r-1)\cdot \epsilon'$.
Letting the values of $\epsilon$ and $\epsilon'$ be very small, 
the ratio becomes
\begin{align*}
\frac{4kr}{k+kr}=\frac{k\cdot 4r}{k\cdot (r+1)}=\frac{4r}{r+1}.
\end{align*}
When the value of $r$ is sufficiently large, 
the CR approaches 4.
The tight analysis reveals the advantages of our designed gadgets in Algorithm~\oural.

\longdelete{
\begin{table}[h]
\caption{Algorithm \oural}
\vspace{-18pt}
\begin{center}
\begin{tabular}{l}
\hline
\noalign{\smallskip}
    Initially assign\ccc $M_P$ and $M_S$ to two machines arbitrarily.\\
    $u$ is the margin and for $W$, $Q$, $\ps$ and so on, see the
    previous part of this section.\\ 
    The input satisfies Condition~\ref{input} of Lemma~\ref{assmpution_chen}.\\
    At any time $t$, \oural\ proceeds as follows:\\
\noalign{\smallskip}
\hline
\noalign{\smallskip}
\begin{minipage}{5.5in}
    \vskip 2pt
    \begin{enumerate}
      \item Execution of jobs:
      \begin{enumerate}
        \item No machines are on.\\ 
        If there exist $t^{\dagger}$ and $t^{\ast}$ such that $W(t^{\dagger},t^{\ast})\geq t^{\ast}-t^{\dagger}$ and $t\geq t^{\dagger}-u$, turn on one machine ($M_P$) and move all jobs in $Q$ to $Q_{M_P}$.
        \item Only $M_P$ is on and a new job $j$ comes.\\
        If there is no $t^{\ast}$ such that $W(t,t^{\ast})>
        t^{\ast}-t$ for jobs in $Q_{M_P} \cup \{j\}$ (i.e., $M_P$ is available), add $j$ to $Q_{M_P}$ and reschedule it.\\
        Otherwise, turn on the other machine ($M_S$) and add $j$
        to $Q_{M_S}$ that is empty.\\
        After that switch $M_P$ and $M_S$.
	    \item $M_S$ is on and a new job $j$ comes.\\
    	If $M_S$ is available, add $j$ into $Q_{M_S}$ and reschedule it.\\
        Otherwise move it to $Q_{M_P}$ (it is guaranteed that $M_P$ is available).
      \end{enumerate}
      \item Idle state:
      \begin{enumerate}
        \item $M_S$ never has an idle state.\\
        That is, we immediately turn off $M_S$ once $Q_{M_S}$ becomes empty.
        \item $M_P$ becomes idle once $Q_{M_P}$ becomes empty.\\ 
        We turn off $M_P$ when its total idle time becomes $2/\ps$.\\
      \end{enumerate}
    \end{enumerate}
    \vskip 2pt
\end{minipage}
\vspace{-5pt}
\\
\hline
\end{tabular}
\end{center}
\end{table}
}

\section{Analysis of the Algorithm}
\label{analysis}
For analysis, we need to show that (i) \oural\ is correct, 
namely it can execute any sequence of jobs arriving under the single machine schedulability condition 
and (ii) its CR is at most 3.
We mostly focus on (ii).
(i) is not hard and the following observation should be enough to see
every job is executed:

(1) Suppose when a new job $j$ comes at time $t$, no machine is
ON. Then if the condition of 1-a for $Q\cup {j}$ is not met, $j$ just
goes to queue $Q$. Once $j$ enters $Q$, it must be executed eventually
by 1-a. If the condition of 1-a is met, then all the jobs in $Q$
go to $Q_{M_P}$ and we go to 1-b. Here $j$ is inserted to $Q_{M_P}$ (if
possible) or is executed on $M_S$ that is turned on at $t$. Thus $j$ is
executed. 

(2) Suppose only $M_P$ is already ON but $M_S$ is OFF when $j$
comes. As (1), $j$ is inserted to $Q_{M_P}$ or executed on $M_S$ that
is turned on at $t$.

(3) Suppose $M_S$ is ON when $j$ comes. If $M_S$ is available,
$j$ is executed on it. Otherwise, note that when $M_S$ is turned on
for the last time, all the jobs in $Q$ go to $M_S$ (recall $M_P$ and
$M_S$ are swapped) and while $M_S$ is on,
all newly coming jobs are executed on $M_P$ due to EDF. This execution
is possible since if the whole input sequence satisfies Condition~\ref{input} of Lemma~\ref{assmpution_chen}, 
its arbitrary suffix obviously does, too.
$j$ is one of them and must be executed. 

To prove the CR, we first define a {\em phase}. 
Suppose the system changes its state from both machines OFF to
at least one machine ON at $t_1$ and returns to the state of both
machines OFF at $t_e$. \ccc
Also let $a_1, a_2, \ldots, a_k$ be the arrival time of the jobs executed during $t_1$ to $t_e$ and let $t_0$ be $\min\{a_1, \ldots, a_k\}$. 
Then the time slot from $t_0$ to $t_e$ is called a phase.
Note that an entire execution of \oural\ consists of some phases.
Let $P(i)$ be the $i$'th phase.

\begin{figure}[htb]
\centering
\includegraphics[scale=0.6]{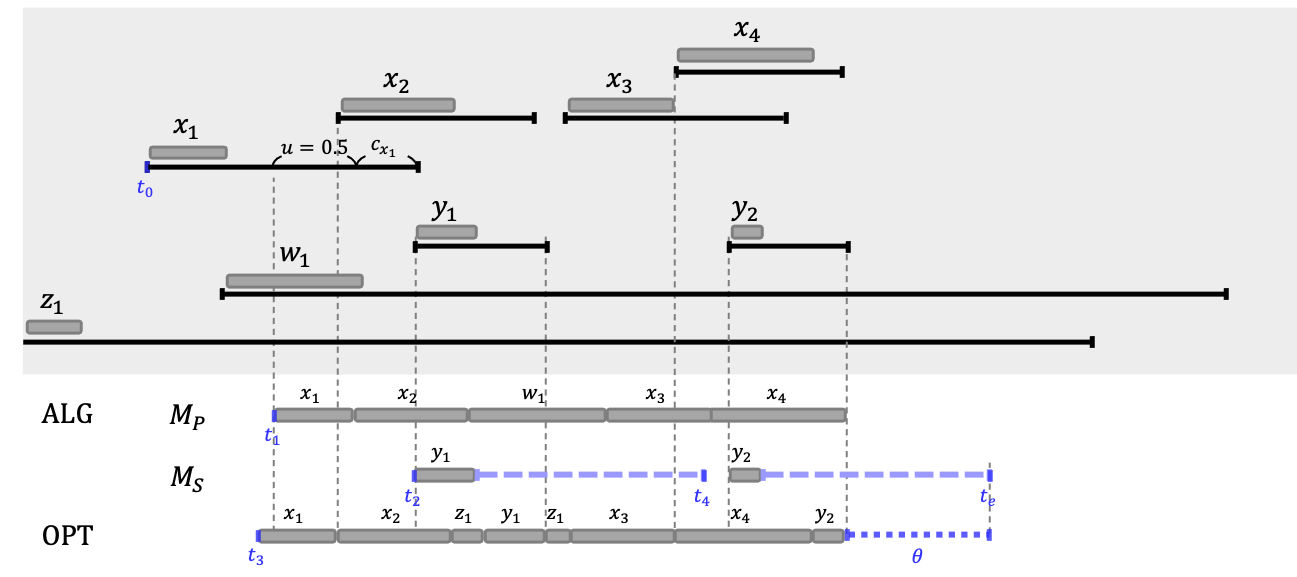}
\caption{A single phase. Now a period of idle time is $2/\ps$. $M_P$ and $M_S$ are swapped at~$a_{y_1}$.}
\label{fig:up5}
\end{figure}

Fig.~\ref{fig:up5} illustrates how a single phase looks like.
$x_1$ and $x_2$ are executed on $M_P$, 
then an urgent $y_1$ comes and executed on the new $M_P$ 
(so $w_1$, $x_3$ and $x_4$ are executed on the new $M_S$).
$w_1$ is a job moved from $Q$ to $Q_{M_P}$ when $x_1$ starts
(more precisely, $w_1$ is scheduled after $x_1$, 
but $x_2$ is inserted later due to the EDF principle).
$x_3$ and $x_4$ are added to $Q_{M_S}$. 
$y_2$ is a job that cannot be inserted to $Q_{M_S}$ because of the
existence of $x_4$; it goes to $M_P$.
Thus $M_P$ can be OFF for some period of time during a single phase 
(but $M_S$ should be ON during that period by definition). 
The earliest arrival time of the jobs \oural\ executes in this
phase is that of $x_1$, which is the start time, $t_0$, of this
phase.
Note that $t_e$ at which the phase ends is the moment when the idle time after $y_2$ expires.
An example of the OPT's execution sequence is given at the bottom of the figure. 
It must execute $x_1$, $x_2$, $x_3$, $x_4$, $y_1$ and $y_2$ since each of them has its arrival time and deadline within the phase.
$w_1$ is not executed by OPT in this example.
By contrast, $z_1$ which is performed by OPT is not executed by \oural\ (can be executed in the previous phase),
so its arrival time is not counted to determine the beginning of the phase.
(Note that OPT and \oural\ as well, may execute each job separately, so
we have two $z_1$'s in the figure.)
Fig.~\ref{fig:up5} is also used for the later analysis of the competitive ratio. 
A phase is called a single-machine phase if only one machine ($M_P$) is ON in that phase and a dual-machine phase otherwise.

Our proof of the CR uses a math induction.
Since \oural\ is deterministic, the set of jobs that are executed by \oural\ in $P(i)$ is uniquely determined once the input is given, 
which we denote by $J_A(i)$.  
For the jobs executed by OPT in $P(i)$, the situation is less clear; 
jobs whose arrival time and deadline are both within $P(i)$ must be executed, 
but jobs such that only one of them is within $P(i)$ may or may not be executed even partially. 
Furthermore, OPT may execute some jobs outside phases, 
namely while no machine is ON. 
Fix an arbitrary execution sequence $S(i)$ of OPT in this phase.
Then we define the following parameters used in the induction.

(1) $\alpha(i)$ ($\lambda(i)$, and $\delta(i)$, resp.) $=$ the total execution time of the jobs executed by both \oural\ and OPT (only by \oural\, and only by OPT possibly partially, resp.) 

(2) $A(i)$ is the cost of \oural\ in $P(i)$ that includes
$\alpha(i)$, $\lambda(i)$, turn-on costs and idle costs. 

(3) $O_f(i)$ is a lower bound for the cost of OPT, 
namely it includes $\alpha(i)$, $\delta(i)$ and idle costs if any. 
Note that it does not include the turn-on cost when $S(i)$ starts since OPT may not need it depending on its state at the end of the previous phase, 
but does include one(s) if $S(i)$ includes turn-off and turn-on in the middle of the phase. 
$O_n(i)$ is similar but we impose the condition that OPT is ON at the end of $P(i)$. 

For instance, consider $P(i)$ whose execution sequence looks like Fig.~\ref{fig:margin} (a).
Then $A(i)=4.5$, $O_f(i)=0.5$ and $O_n(i)=2.5$,
where 0.5 is for the execution of jobs and $2=(2/\ps)\ps$ is the idle cost to keep it ON until the end of the phase.\ccc

\begin{theorem}
\oural\ is correct and its CR is at most 3 for $u=0.5$.\ccc
\label{ubtheorem}
\end{theorem}

\begin{proof}
We omit the first part (see the previous observation).
For the CR, we fix an arbitrary execution sequence $S$ ($S(i)$ is
its subsequence associated with $P(i)$)
for the entire execution sequence of OPT and prove two lemmas.

\begin{lemma}\label{lemma3}
The following (2) and (3) hold for each phase for $r=3$.
\begin{align}
&rO_f(i)-A(i)\geq\delta(i)-\lambda(i)-r.\\
&rO_n(i)-A(i)\geq\delta(i)-\lambda(i).
\end{align}
\end{lemma}

The proof will be given later. 
Let $O(i)$ be the (real) cost of OPT under the sequence $S(i)$ in $P(i)$. 
Also let $m$ be an integer less than or equal to the number of phases.

\begin{lemma}
For $r=3$, we have
\begin{numcases}{\sum_{i=1}^{m}\left(rO(i)-A(i)\right)\geq}
\sum_{i=1}^{m}\left(\delta(i)-\lambda(i)\right)
\text{\hspace{3mm} if OPT is OFF at the end of $P(m)$.}\\
\sum_{i=1}^{m}\left(\delta(i)-\lambda(i)\right)+r
\text{\hspace{3mm} if OPT is ON at the end of $P(m)$.}
\end{numcases}
\end{lemma}
\begin{proof}
Suppose OPT is OFF at the end of $P(i)$. 
Then $O(i)$ should be at least $O_f(i)$ since the latter is a lower bound and similarly for $O_n(i)$ if OPT is ON at the end of the phase.
For $m=1$, note that OPT must spend the turn-on cost of 1 that is not included in either $O_f(i)$ or $O_n(i)$.
Therefore if OPT is OFF at the end of the phase, we have
\begin{align*}
&rO(1)-A(1)\geq rO_f(1)+r-A(1) \geq\delta(1)-\lambda(1)
\end{align*}
by Lemma 3.
Otherwise, if OPT is ON at the end of the phase, we have
\begin{align*}
&rO(1)-A(1)\geq rO_n(1)+r-A(1) \geq\delta(1)-\lambda(1)+r
\end{align*}
similarly. 
Now suppose the lemma is true for $m'=m-1$. 
Then to prove that the lemma also holds for $m'=m$, 
we consider four cases and define $C \rightarrow D$ as the states of OPT at the end of $P(m-1)$ and $P(m)$ respectively:
(i) OFF $\rightarrow$ OFF, (ii) OFF $\rightarrow$ ON, (iii) ON $\rightarrow$ OFF and (iv) ON $\rightarrow$ ON.
For case (i), OPT must pay the turn-on cost in $P(m)$, so
\begin{align*}
\sum_{i=1}^{m}\left(rO(i)-A(i)\right)&=\sum_{i=1}^{m-1}\left(rO(i)-A(i)\right)
+rO(m)-A(m)\\
&\geq\sum_{i=1}^{m-1}\left(\delta(i)-\lambda(i)\right) +0\cdot r
+rO_f(m)-A(m)+1\cdot r\\
&\geq\sum_{i=1}^{m-1}\left(\delta(i)-\lambda(i)\right)+\delta(m)-\lambda(m)\\
&=\sum_{i=1}^{m}\left(\delta(i)-\lambda(i)\right)
\end{align*}
by the induction hypothesis and Lemma 3.
Here 0 and 1 before $r$ are for handling the four cases. 
Since the current case is OFF $\rightarrow$ OFF, the first 0 means formula (4) does not have $r$ on the right-hand side and the second 1 means that OPT must turn on in $P(m)$.
The other cases are similar (just 0 and 1 before $r$ change) and may be omitted.
\end{proof}

If there are $m$ phases in total, it must be that
$\sum_{i=1}^{m}\left(\delta(i)-\lambda(i)\right)=0$ 
and thus the theorem is proved. 
What remains is to prove Lemma 3.

\begin{proof}
(Proof of Lemma 3)
Suppose $x_1=(a_1,d_1,c_1)$ is the first job executed in some phase $P$ starting from $t_0$ and ending at $t_e$.
Then since $x_1$ is executed in $P$,
$t_0\leq a_1$ (there may be another job executed in $P$ and having an earlier arrival time).
Also, since the ending time of $x_1$'s execution is to be $d_1-0.5$
due to the delay
(or not delayed at all if this amount of delay is impossible)
and we have a mandatory idle time, $2/\ps\geq2$, after its (or a later job's) execution, 
it must be that $d_1\leq t_k$. 
This means the period ($a_1,d_1$) is included in $P$, meaning OPT also executes $x_1$ in $P$. 
Thus, in each phase, both \oural\ and OPT execute at least one job. 
Also it turns out, by definition, that \oural\ never executes a job outside phases.

Note, however, that OPT may execute some job, say $x$, outside phases
(this happens, e.g., if $u_1$ in Fig.~\ref{fig:up6} is executed by OPT after this phase and before the next phase). 
If that happens, we consider that $x$ is executed in a ``special'' phase, $P(i')$,
by extending the definition of a phase. 
Note that this phase has $A(i)=\lambda(i)=0$ and both $O_f(i)$ and $O_n(i)$ are at least $\delta(i)$, 
so (2) and (3) obviously hold. 
A special phase may continue to/from a neighboring (normal) phase.

\begin{figure}[htb]
\centering
\includegraphics[scale=0.6]{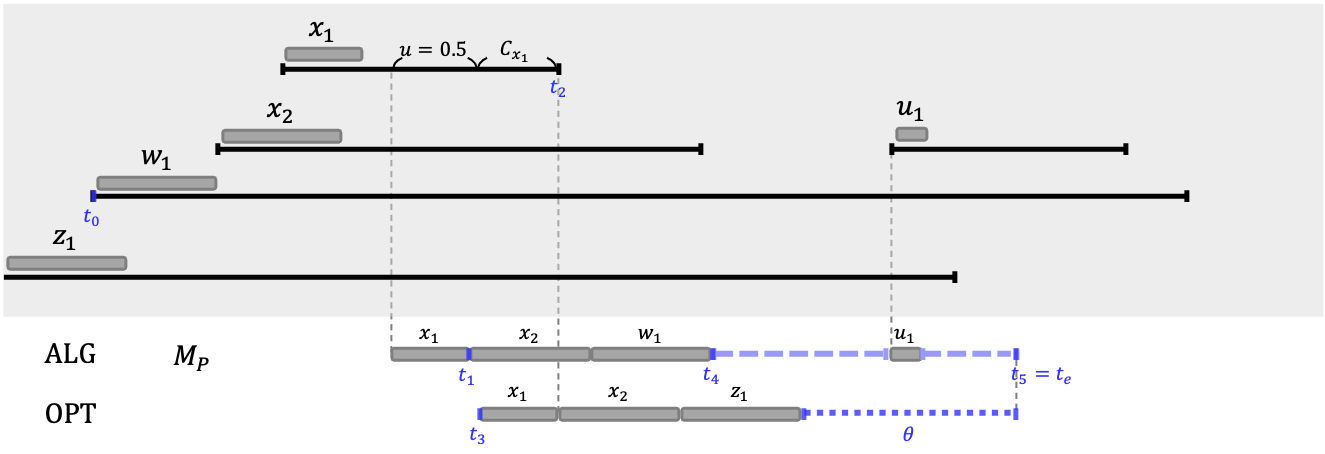}
\caption{A single-machine phase. Recall a period of idle time of \oural\ is $2/\ps$.}
\label{fig:up6}
\end{figure}

We first prove the lemma for a single-machine phase $P(i)$.
See Fig.~\ref{fig:up6} which illustrates the execution sequences of \oural\ and OPT as with Fig.~\ref{fig:up5}.
In this figure,
the phase starts at the arrival time of $w_1$ and ends when ALG's
machine ($M_P$) turns off.
$x_1$ is the first job executed in this phase,
and $x_2$ is a job whose arrival time and deadline are both within this phase 
(so must be executed by both \oural\ and OPT in this phase). 
$z_1$ has only its deadline, and
$u_1$ and $w_1$ have only their arrival times within the phase.
Thus there are several different types of jobs, 
but what is important is whether or not each job is executed in this phase.
In the example of this figure,
$x_1$ and $x_2$, called {\em type-AO} jobs, are executed by both \oural\ and OPT, 
$w_1$ and $u_1$, {\em type-A}, by only \oural, 
and $z_1$, {\em type-O}, only by OPT. 
Each type can include an arbitrary number of jobs, 
but as will be seen in a moment, 
those numbers are not important. 
So we will progress our analysis by using these five jobs, 
$x_1,x_2,w_1,u_1,z_1$, 
for a while and will mention the generalization after this analysis.
It should also be noted that a single job may be executed by OPT in two or more phases. 
If this happens, we divide that job into two or more parts and allow each part can have a different job type, if necessary.

Set the following moments (see the figure) of time: $t_1$ and $t_2$: 
the ending time of execution and the deadline of $x_1$, respectively.
$t_3$: the time when OPT becomes ON after time $t_0$ to execute $x_1$ or maybe another job.
$t_4$ and $t_5$: the start and ending times of the idle state of \oural, respectively. 
Thus one can observe that neither the number of jobs in each type nor their execution sequence is important to define these times,
except $x_1$ that is first executed.
For the five jobs, it turns out that $A(i)=c_{x_1}+c_{x_2}+c_{w_1}+c_{u_1}+1+2$ 
(1 is the turn-on cost and 2 is the idle cost; 
recall once \oural\ enters an idle state,
it continues until its total time becomes $2/\ps$,
i.e., until its total idle cost becomes $(2/\ps)\ps=2$),
and $O_f(i)\geq(c_{x_1}+c_{x_2}+c_{z_1})$.
OPT's execution of some job may exceed the border of the phase. 
If that happens, as mentioned before, 
we can partition that job into two parts, 
the first one ends at the end of the phase and the second one is to be the remaining part, 
which is executed in the following special phase.

Now we have
\begin{align*}
rO_f(i)-A(i)&\geq 3(c_{x_1}+c_{x_2}+c_{z_1})-(c_{x_1}+c_{x_2}+c_{w_1}+c_{u_1}+3)\\
&=2(c_{x_1}+c_{x_2}+c_{z_1})+c_{z_1}-(c_{w_1}+c_{u_1})-3\\
&\geq c_{z_1}-(c_{w_1}+c_{u_1})-3.
\end{align*}

Thus (2) holds for any nonnegative values of $c_{x_1}$ through $c_{u_1}$.
Similarly, letting $\theta$ be the idle time of OPT (if any) that keeps the OPT's machine ON until $t_5$ 
(OPT can turn off once and turn on at or before $t_5$, but OPT would then need an extra turn-on cost and our proof becomes easier), 
we have $O_n(i)\geq c_{x_1}+c_{x_2}+c_{z_1}+\ps \theta$ and hence
\begin{align*}
rO_n(i)-A(i)&\geq 3(c_{x_1}+c_{x_2}+c_{z_1}+\ps \theta)-(c_{x_1}+c_{x_2}+c_{w_1}+c_{u_1}+3)\\
&\geq 2(c_{x_1}+c_{x_2}+c_{z_1}+\ps \theta)+c_{z_1}-(c_{w_1}+c_{u_1})-3.
\end{align*}

Here we have the following claim.
\begin{clm}
$$c_{x_1}+c_{x_2}+c_{z_1}+\ps \theta \geq 1.5$$
\label{claim}
\end{clm}
\begin{proof}
We have a sequence of time relations:\\
\hspace*{15mm} (i) $t_2-t_1\leq 0.5$ due to the margin,\\
\hspace*{15mm} (ii) $t_3\leq t_2$ ($x_1$ must be executed before its deadline),\\
\hspace*{15mm} (iii) $t_1\leq t_4$ (obviously),\\
\hspace*{15mm} (iv) $t_3-t_4\leq t_3-t_1$ (by (iii)) $\leq
t_2-t_1$ (by (ii)) $\leq 0.5$ (by (i)),\\
\hspace*{15mm} (v) $t_5-t_4\geq 2/\ps$ (the total idle time),\\
\hspace*{15mm} (vi) $t_4-t_5\leq -2/\ps$ (inversion of (v)), and\\
\hspace*{15mm} (vii) $c_{x_1}+c_{x_2}+c_{z_1}+\theta=t_5-t_3=-(t_3-t_5)\geq 2/\ps -0.5$ (by (iv)+(vi)).\\
We may need a bit more explanation for (i). Recall that if $x_1$ is
delayed and nothing happens, $t_2-t_1$ is exactly 0.5. However, some
(small) job that comes after $x_1$'s execution has started
may interrupt $x_1$ and be inserted. It should also be noted that if
$d_{x_1}-(a_{x_1}+c_{x_1}) < 0.5$, then $x_1$ is not delayed in the
first place. Now we have from (vii) (note $\ps\leq 1$)\ccc
\hspace{20mm} $$c_{x_1}+c_{x_2}+c_{z_1}+\ps\theta \geq \ps(c_{x_1}+c_{x_2}+c_{z_1}+\theta)
\geq 2-0.5\ps \geq 1.5.$$
\end{proof}

Thus we have
\begin{align*}
rO_n(i)-A(i)&\geq 2\times 1.5 +c_{z_1}-(c_{w_1}+c_{u_1})-3=c_{z_1}-(c_{w_1}+c_{u_1}),
\end{align*}
meaning (3) is also true.
In general, we have more (or less) jobs in each job type. 
However,
one can see that the definitions of $t_1$ through $t_5$ are not affected by that (only the first job, $x_1$ is important). 
Also we can simply replace $c_{x_1}$ by the sum of execution times of type-AO jobs and similarly for type-A and type-O jobs. 
Thus the extension to the general case is straightforward and details may be omitted.

We next consider a dual-machine phase as shown in Fig.~\ref{fig:up5},
where\ccc there are two executions, $y_1$ and $y_2$ on $M_P$
while $M_S$ is busy. (Recall $M_P$ and $M_S$ are swapped when $M_S$
turns on. Now $M_S$ is busy and it may not be available for urgent jobs.)\ccc
Let the first execution be $E_1$ and the second one $E_2$.
We first consider the case that $E_2$ does not exist. Namely, there
is no $x_3$, $x_4$ or $y_2$ and OPT has an idle time after having
executed $z_1$.
Thus the phase would be ending at the end of the idle state following
the execution of $y_1$\ccc and we prove the lemma for this case first. 
As before, we use specific examples for jobs to be executed in this phase,
$x_1$, $x_2$ and $y_1$ for type-AO, $w_1$ for type-A and $z_1$ for
type-O. Since the $M_S$'s turn-on is provoked, there must be a set $S$
of jobs such that their execution on $M_P$ is impossible
during the period from some $t'_1$ to some $t'_2$.\ccc
We assume that $S=\{x_1,x_2,y_1\}$, where $t'_1=t_1$ and $t'_2$ is the
deadline of $y_1$.\ccc
What we do for the generalization is the same as before, 
namely we replace $\{x_1,x_2,y_1\}$ by the real jobs in $S$, 
maybe more jobs for type-AO, replace $w_1$ by real type-A jobs, and $z_1$ by real type-O jobs.
  
Now we start with definitions of time moments (see the figure) as before. $t_1$, $t_2$ and $t_3$: the time when $M_P$ and $M_S$ become on and the time OPT becomes busy, respectively. 
$t_4$: the time when the idle state of $M_P$ is ended and this is
the end of the phase, too.\ccc
For those five jobs, we have
$A(i)=c_{x_1}+c_{x_2}+c_{y_1}+c_{w_1}+2+2$ (we now need to turn on
both $M_P$ and $M_S$), $O_f(i)\geq (c_{x_1}+c_{x_2}+c_{y_1}+c_{z_1})$,
and $O_n(i)\geq (c_{x_1}+c_{x_2}+c_{y_1}+c_{z_1}+\ps \theta)$, where $\theta$ is
the idle time (if any) to keep the OPT's machine on until $t_4$.
  
To prove formulas (2), we have (recall $\ps\leq 1$)
\begin{align*}
rO_f(i)-A(i)&\geq 3(c_{x_1}+c_{x_2}+c_{y_1}+c_{z_1})-(c_{x_1}+c_{x_2}+c_{y_1}+c_{w_1}+4.0)\\
&=2(c_{x_1}+c_{x_2}+c_{y_1}+c_{z_1})+c_{z_1}-c_{w_1}-4.0.\\
\end{align*}
Here we can claim that $c_{x_1}+c_{x_2}+c_{y_1}\geq 0.5$. 
The reason is that there is a margin of 0.5 between the end of the
execution of $x_1$ and its deadline (recall again if realizing this
margin is impossible, $M_S$ would not have been turned on).\ccc
So if $c_{x_1}+c_{x_2}+c_{y_1}< 0.5$, 
then it follows of course $c_{x_2}+c_{y_1}< 0.5$, 
which means that $x_2$ and $y_1$ could have been executed using this margin time 
(they can interrupt the execution of $x_1$) on $M_P$, resulting in a contradiction. 
Thus $rO_f(i)-A(i)\geq 2\times 0.5+c_{z_1}-c_{w_1}-4.0 =c_{z_1}-c_{w_1}-3$ and we are done for (2).
  
For formula (3), we have ($\ps\leq 1$)
\begin{align*}
rO_n(i)-A(i)&\geq
3(c_{x_1}+c_{x_2}+c_{y_1}+c_{z_1}+\ps\theta)-(c_{x_1}+c_{x_2}+c_{y_1}+c_{w_1}+4.0)\\
&=2((c_{x_1}+c_{x_2}+c_{y_1}+c_{z_1})+3\ps\theta+c_{z_1}-c_{w_1}-4.0\\
&\geq 2\ps (c_{x_1}+c_{x_2}+c_{y_1}+c_{z_1}+\theta)+c_{z_1}-c_{w_1}-4.0.\\
\end{align*}
We again have the following time relations:\\
\hspace*{15mm} (i) $t_1<t_2$ ($M_S$ never turns on before $M_P$),\\
\hspace*{15mm} (ii) $t_3<t_1$ (see below),\\
\hspace*{15mm} (iii) $t_4-t_3\geq t_4-t_1$ (by (ii)) $\geq t_4-t_2$
(by (i)) $\geq 2/\ps$.\\
For (ii) recall that $M_P$ cannot execute $x_1$, $x_2$ and $y_1$ from time $t_1$. 
Since OPT does execute those jobs by a single machine, it should have started their execution before $t_1$, meaning $t_3<t_1$.
Since $c_{x_1}+c_{x_2}+c_{y_1}+c_{z_1}+\theta=t_4-t_3$, 
we finally have $$rO_n(i)-A(i)\geq 2\ps (2/\ps)+c_{z_1}-c_{w_1}-4.0=c_{z_1}-c_{w_1},$$
and (3) is proved. 
The generalization to arbitrary number of jobs is the same as before and may be omitted.
  
Finally we consider the case that $E_2$ (or even more)\ccc exists, which can be simply done by considering that 
a new virtual phase, just an interval starting from $t_4$ and ending at $t_e$ where we do the similar calculation as above.
Note that \oural\ needs only
one turn-on cost and so the energy consumption of the new virtual phase is the job execution costs $+3$\ccc
instead of $+4$ above. Therefore we do not need to lower bound the
cost for executing type-AO jobs (as we did for
$c_{x_1}+c_{x_2}+c_{y_1}$ above). Also, since the new virtual phase obviously includes
the whole idle time of $M_S$, the proof for formula
(3) is also straightforward.\ccc Details may be omitted.
  
Thus Lemma~3 is proved. 
\end{proof}

And the proof of Theorem~\ref{ubtheorem} is also concluded.
\end{proof}

\longdelete{
\section{Analysis of the Algorithm}
\label{analysis}
For analysis, we need to show that (i) \oural\ is correct, 
namely it can execute any sequence of jobs arriving under the single machine schedulability condition 
and (ii) its CR is at most 3.
We mostly focus on (ii).
(i) is not hard and the following observation should be enough.

The state of the system is described as A/B, where A is ON (OFF, resp.) if $M_P$ is ON (OFF, resp.). 
Similarly for B for $M_S$. 
Note that if B is ON, $M_S$ is busy, 
but if A is ON, $M_P$ is either busy or idle. 
The initial state is OFF/OFF.
When the due time of some job(s) comes, the state becomes ON/OFF. 
If no urgent jobs comes, the system returns to OFF/OFF after having processed a certain number of jobs on $M_P$ and its idle state has expired. 
Otherwise, if an urgent job comes, the system becomes ON/ON, and $M_P$ and $M_S$ are swapped. 
If no more jobs come, the system becomes ON/OFF or OFF/ON depending on which machine is finished first.
Otherwise, suppose a job comes when the system is ON/ON.
Then \oural\ first checks if $M_S$ is available and use it if possible. Otherwise \oural\ uses $M_P$ without any delay except one due to the EDF scheduling. 
When the system becomes ON/ON, $Q$ is empty and as we have just mentioned, no job goes to $Q$ when the system keeps ON/ON. 
Also note that if the whole input sequence satisfies Condition~\ref{input} of Lemma~\ref{assmpution_chen}, 
its arbitrary suffix obviously does, too.
Thus the availability of $M_P$ when the system is ON/ON is guaranteed.
Similarly for a new job when the system is OFF/ON.
$M_P$ is obviously available.

To prove the CR, we first define a {\em phase}. 
Suppose the system changes from OFF/OFF to ON/OFF at time $t_1$,
and returns to OFF/OFF for the first time at $t_e$.
Also let $a_1, a_2, \ldots, a_k$ be the arrival time of the jobs executed during $t_1$ to $t_e$ and let $t_0$ be $\min\{a_1, \ldots, a_k\}$. 
Then the time slot from $t_0$ to $t_e$ is called a phase.
Note that an entire execution of \oural\ consists of some phases.
Let $P(i)$ be the $i$'th phase.

\begin{figure}[htb]
\centering
\includegraphics[scale=0.6]{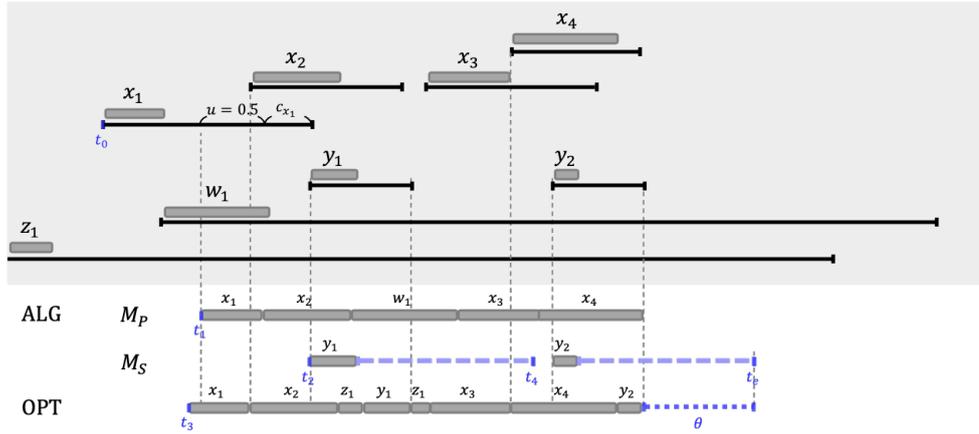}
\caption{A single phase. Now a period of idle time is $2/\ps$. $M_P$ and $M_S$ are swapped at~$a_{y_1}$.}
\label{fig:up5}
\end{figure}

Fig.~\ref{fig:up5} illustrates how a single phase looks like.
$x_1$ and $x_2$ are executed on $M_P$, 
then an urgent $y_1$ comes and executed on the new $M_P$ 
(so $w_1$, $x_3$ and $x_4$ are executed on the new $M_S$).
$w_1$ is a job moved from $Q$ to $Q_{M_P}$ when $x_1$ starts
(more precisely, $w_1$ is scheduled after $x_1$, 
but $x_2$ is inserted later due to the EDF principle).
$x_3$ and $x_4$ are added to $Q_{M_S}$. 
$y_2$ is a job that cannot be inserted to $Q_{M_S}$ because of the
existence of $x_4$; it goes to $M_P$.
Thus $M_P$ can be OFF for some period of time during a single phase 
(but $M_S$ should be ON during that period by definition). 
The earliest arrival time of the jobs \oural\ executes in this
phase is that of $x_1$, which is the start time, $t_0$, of this
phase.
Note that $t_e$ at which the phase ends is the moment when the idle time after $y_2$ expires.
An example of the OPT's execution sequence is given at the bottom of the figure. 
It must execute $x_1$, $x_2$, $x_3$, $x_4$, $y_1$ and $y_2$ since each of them has its arrival time and deadline within the phase.
$w_1$ is not executed by OPT in this example.
By contrast, $z_1$ which is performed by OPT is not executed by \oural\ (can be executed in the previous phase),
so its arrival time is not counted to determine the beginning of the phase.
(Note that OPT and \oural\ as well, may execute each job separately, so
we have two $z_1$'s in the figure.)
Fig.~\ref{fig:up5} is also used for the later analysis of the competitive ratio. 
A phase is called a single-machine phase if only one machine ($M_P$) is ON in that phase and a dual-machine phase otherwise.

Our proof of the CR uses a math induction.
Since \oural\ is deterministic, the set of jobs that are executed by \oural\ in $P(i)$ is uniquely determined once the input is given, 
which we denote by $J_A(i)$.  
For the jobs executed by OPT in $P(i)$, the situation is less clear; 
jobs whose arrival time and deadline are both within $P(i)$ must be executed, 
but jobs such that only one of them is within $P(i)$ may or may not be executed even partially. 
Furthermore, OPT may execute some jobs outside phases, 
namely while the system is OFF/OFF. 
Fix an arbitrary execution sequence $S(i)$ of OPT in this phase.
Then we can determine $\alpha(i)$ ($\lambda(i)$, and $\delta(i)$, resp.) $=$ the total execution time of the jobs executed by both \oural\ and OPT (only by \oural\, and only by OPT possibly partially, resp.) 
Now define three parameters for our math induction, 
$A(i)$, $O_f(i)$ and $O_n(i)$. 
$A(i)$ is the cost of \oural\ in $P(i)$ that includes $\alpha(i)$, $\lambda(i)$, turn-on costs and idle costs. 
$O_f(i)$ is a lower bound for the cost of OPT, 
namely it includes $\alpha(i)$, $\delta(i)$ and idle costs if any. 
Note that it does not include the turn-on cost when $S(i)$ starts since OPT may not need it depending on its state at the end of the previous phase, 
but does include one(s) if $S(i)$ includes turn-off and turn-on in the middle of the phase. 
$O_n(i)$ is similar but we impose the condition that OPT is ON at the end of $P(i)$. 
For instance, consider $P(i)$ whose execution sequence looks like Fig.~\ref{fig:up2-4} (b).
Then $A(i)=4.5$, $O_f(i)=0.5$ and $O_n(i)=2.5$,
where 0.5 is the execution of jobs and $2=(2/\ps)\ps$ is the idle cost to keep it ON until the end of the phase.

\begin{theorem}
\oural\ is correct and its CR is at most 3 for $u=0.5$.\ccc
\label{ubtheorem}
\end{theorem}

\begin{proof}
We omit the first part (see the previous observation).
For the CR, we fix an arbitrary execution sequence $S$ ($S(i)$ is
its subsequence associated with $P(i)$)
for the entire execution sequence of OPT and prove two lemmas.

\begin{lemma}\label{lemma3}
The following (2) and (3) hold for each phase for $r=3$.
\begin{align}
&rO_f(i)-A(i)\geq\delta(i)-\lambda(i)-r.\\
&rO_n(i)-A(i)\geq\delta(i)-\lambda(i).
\end{align}
\end{lemma}

The proof will be given later. 
Let $O(i)$ be the (real) cost of OPT under the sequence $S(i)$ in $P(i)$. 
Also let $m$ be an integer less than or equal to the number of phases.

\begin{lemma}
For $r=3$, we have
\begin{numcases}{\sum_{i=1}^{m}\left(rO(i)-A(i)\right)\geq}
\sum_{i=1}^{m}\left(\delta(i)-\lambda(i)\right)
\text{\hspace{3mm} if OPT is OFF at the end of $P(m)$.}\\
\sum_{i=1}^{m}\left(\delta(i)-\lambda(i)\right)+r
\text{\hspace{3mm} if OPT is ON at the end of $P(m)$.}
\end{numcases}
\end{lemma}
\begin{proof}
Suppose OPT is OFF at the end of $P(i)$. 
Then $O(i)$ should be at least $O_f(i)$ since the latter is a lower bound and similarly for $O_n(i)$ if OPT is ON at the end of the phase.
For $m=1$, note that OPT must spend the turn-on cost of 1 that is not included in either $O_f(i)$ or $O_n(i)$.
Therefore if OPT is OFF at the end of the phase, we have
\begin{align*}
&rO(1)-A(1)\geq rO_f(1)+r-A(1) \geq\delta(1)-\lambda(1)
\end{align*}
by Lemma 3.
Otherwise, if OPT is ON at the end of the phase, we have
\begin{align*}
&rO(1)-A(1)\geq rO_n(1)+r-A(1) \geq\delta(1)-\lambda(1)+r
\end{align*}
similarly. 
Now suppose the lemma is true for $m'=m-1$. 
Then to prove that the lemma also holds for $m'=m$, 
we consider four cases and define $C \rightarrow D$ as the states of OPT at the end of $P(m-1)$ and $P(m)$ respectively:
(i) OFF $\rightarrow$ OFF, (ii) OFF $\rightarrow$ ON, (iii) ON $\rightarrow$ OFF and (iv) ON $\rightarrow$ ON.
For case (i), OPT must pay the turn-on cost in $P(m)$, so
\begin{align*}
\sum_{i=1}^{m}\left(rO(i)-A(i)\right)&=\sum_{i=1}^{m-1}\left(rO(i)-A(i)\right)
+rO(m)-A(m)\\
&\geq\sum_{i=1}^{m-1}\left(\delta(i)-\lambda(i)\right) +0\cdot r
+rO_f(m)-A(m)+1\cdot r\\
&\geq\sum_{i=1}^{m-1}\left(\delta(i)-\lambda(i)\right)+\delta(m)-\lambda(m)\\
&=\sum_{i=1}^{m}\left(\delta(i)-\lambda(i)\right)
\end{align*}
by the induction hypothesis and Lemma 3.
Here 0 and 1 before $r$ are for handling the four cases. 
Since the current case is OFF $\rightarrow$ OFF, the first 0 means formula (4) does not have $r$ on the right-hand side and the second 1 means that OPT must turn on in $P(m)$.
The other cases are similar (just 0 and 1 before $r$ change) and may be omitted.
\end{proof}

If there are $m$ phases in total, it must be that
$\sum_{i=1}^{m}\left(\delta(i)-\lambda(i)\right)=0$ 
and thus the theorem is proved. 
What remains is to prove Lemma 3.

\begin{proof}
(Proof of Lemma 3)
Suppose $x_1=(a_1,d_1,c_1)$ is the first job executed in some phase $P$ starting from $t_0$ and ending at $t_k$. 
Then since $x_1$ is executed in $P$,
$t_0\leq a_1$ (there may be another job executed in $P$ and having an earlier arrival time).
Also, since the ending time of $x_1$'s execution is to be $d_1-0.5$
due to the delay
(or not delayed at all if this amount of delay is impossible)
and we have a mandatory idle time, $2/\ps\geq2$, after its (or a later job's) execution, 
it must be that $d_1\leq t_k$. 
This means the period ($a_1,d_1$) is included in $P$, meaning OPT also executes $x_1$ in $P$. 
Thus, in each phase, both \oural\ and OPT execute at least one job. 
Also it turns out, by definition, that \oural\ never executes a job outside phases.

Note, however, that OPT may execute some job, say $x$, outside phases
(this happens, e.g., if $u_1$ in Fig.~\ref{fig:up6} is executed by OPT after this phase and before the next phase). 
If that happens, we consider that $x$ is executed in a ``special'' phase, $P(i')$,
by extending the definition of a phase. 
Note that this phase has $A(i)=\lambda(i)=0$ and both $O_f(i)$ and $O_n(i)$ are at least $\delta(i)$, 
so (2) and (3) obviously hold. 
A special phase may continue to/from a neighboring (normal) phase.

\begin{figure}[htb]
\centering
\includegraphics[scale=0.6]{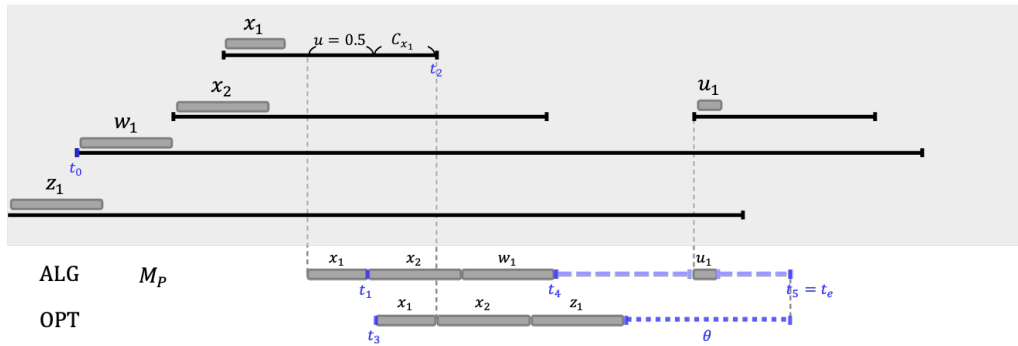}
\caption{A single-machine phase. Recall a period of idle time of \oural\ is $2/\ps$.}
\label{fig:up6}
\end{figure}

We first prove the lemma for a single-machine phase $P(i)$.
See Fig.~\ref{fig:up6} which illustrates the execution sequences of \oural\ and OPT as with Fig.~\ref{fig:up5}.
In this figure,
the phase starts at the arrival time of $w_1$ and ends when ALG's
machine ($M_P$) turns off.
$x_1$ is the first job executed in this phase,
and $x_2$ is a job whose arrival time and deadline are both within this phase 
(so must be executed by both \oural\ and OPT in this phase). 
$z_1$ has only its deadline, and
$u_1$ and $w_1$ have only their arrival times within the phase.
Thus there are several different types of jobs, 
but what is important is whether or not each job is executed in this phase.
In the example of this figure,
$x_1$ and $x_2$, called {\em type-AO} jobs, are executed by both \oural\ and OPT, 
$w_1$ and $u_1$, {\em type-A}, by only \oural, 
and $z_1$, {\em type-O}, only by OPT. 
Each type can include an arbitrary number of jobs, 
but as will be seen in a moment, 
those numbers are not important. 
So we will progress our analysis by using these five jobs, 
$x_1,x_2,w_1,u_1,z_1$, 
for a while and will mention the generalization after this analysis.
It should also be noted that a single job may be executed by OPT in two or more phases. 
If this happens, we divide that job into two or more parts and allow each part can have a different job type, if necessary.

Set the following moments (see the figure) of time: $t_1$ and $t_2$: 
the ending time of execution and the deadline of $x_1$, respectively.
$t_3$: the time when OPT becomes ON after time $t_0$ to execute $x_1$ or maybe another job.
$t_4$ and $t_5$: the start and ending times of the idle state of \oural, respectively. 
Thus one can observe that neither the number of jobs in each type nor their execution sequence is important to define these times,
except $x_1$ that is first executed.
For the five jobs, it turns out that $A(i)=c_{x_1}+c_{x_2}+c_{w_1}+c_{u_1}+1+2$ 
(1 is the turn-on cost and 2 is the idle cost; 
recall once \oural\ enters an idle state,
it continues until its total time becomes $2/\ps$,
i.e., until its total idle cost becomes $(2/\ps)\ps=2$),
and $O_f(i)\geq(c_{x_1}+c_{x_2}+c_{z_1})$.
OPT's execution of some job may exceed the border of the phase. 
If that happens, as mentioned before, 
we can partition that job into two parts, 
the first one ends at the end of the phase and the second one is to be the remaining part, 
which is executed in the following special phase.

Now we have
\begin{align*}
rO_f(i)-A(i)&\geq 3(c_{x_1}+c_{x_2}+c_{z_1})-(c_{x_1}+c_{x_2}+c_{w_1}+c_{u_1}+3)\\
&=2(c_{x_1}+c_{x_2}+c_{z_1})+c_{z_1}-(c_{w_1}+c_{u_1})-3\\
&\geq c_{z_1}-(c_{w_1}+c_{u_1})-3.
\end{align*}

Thus (2) holds for any nonnegative values of $c_{x_1}$ through $c_{u_1}$.
Similarly, letting $\theta$ be the idle time of OPT (if any) that keeps the OPT's machine ON until $t_5$ 
(OPT can turn off once and turn on at or before $t_5$, but OPT would then need an extra turn-on cost and our proof becomes easier), 
we have $O_n(i)\geq c_{x_1}+c_{x_2}+c_{z_1}+\ps \theta$ and hence
\begin{align*}
rO_n(i)-A(i)&\geq 3(c_{x_1}+c_{x_2}+c_{z_1}+\ps \theta)-(c_{x_1}+c_{x_2}+c_{w_1}+c_{u_1}+3)\\
&\geq 2(c_{x_1}+c_{x_2}+c_{z_1}+\ps \theta)+c_{z_1}-(c_{w_1}+c_{u_1})-3.
\end{align*}

Here we have the claim that $c_{x_1}+c_{x_2}+c_{z_1}+\ps \theta \geq
1.5$. Then we have 
$$rO_n(i)-A(i)\geq 2\times 1.5
+c_{z_1}-(c_{w_1}+c_{u_1})-3=c_{z_1}-(c_{w_1}+c_{u_1})$$
and formula (3) is also true. See Appendix for the proof of the claim,
for the generalization on the job set and for the proof for a
dual-machine phase. 
\end{proof} \end{proof}
}

\section{Lower Bound}
In this section we give our second result, a lower bound of 2.1, 
which improves 2.06 obtained in~\cite{chen2015online}.
Throughout this section we set $\ps=1$
(which seems the worst case for online algorithms).

\begin{theorem}
The CR of any online algorithm for the online DPM job scheduling problem is at least 2.1. 
\end{theorem}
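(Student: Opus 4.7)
The plan is to exhibit an adversarial family of inputs, parameterized so that for every deterministic online algorithm ALG at least one instance forces a competitive ratio of at least $2.1$. Throughout, set $E=\psi_b=\psi_\sigma=1$, so $\mathcal{B}=1$. The construction is staged and the branching happens at each point where ALG is forced into an irreversible decision: turning a machine on (primary or secondary) or committing idle time. Since ALG is deterministic, the adversary knows these decisions in advance and picks the branch that maximizes the resulting ratio.

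First I would release an initial job $j_1=(0,D,c_1)$ with a small slack $D-c_1=\sigma$ that should be roughly comparable to the idle-cost parameters of any ``reasonable'' ALG. This job acts as a probe. Let $\tau$ be the first moment ALG turns on a machine on account of $j_1$. If $\tau$ is small (ALG acts early), the adversary terminates: ALG pays turn-on~$+$~execution~$+$~a forced idle cost on its primary machine (because once on, ALG must decide its own idle timeout or turn off prematurely at cost), while OPT needs only a single turn-on plus the execution of $j_1$, so the ratio approaches $2$ plus a term controlled by ALG's idle policy. If $\tau$ is large (ALG delays), the adversary releases a second job $j_2$ tuned so that $j_2$'s deadline is incompatible with ALG's delayed schedule of $j_1$; ALG is then forced to turn on its secondary machine, paying a second $E$, while OPT, knowing $j_2$ in advance, schedules $j_1$ and $j_2$ on a single machine without any extra turn-on. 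Adding a third job $j_3$ released just inside or just outside ALG's post-execution idle window (depending on ALG's chosen timeout) amplifies the gap, either forcing ALG to restart a third machine cycle or forcing it to keep an idle machine on uselessly.

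To obtain $2.1$ rather than $2.06$, the key is to optimize jointly over the placement and execution times of $j_2$ and $j_3$ and over the threshold at which the adversary switches branches, while still respecting the single-machine schedulability condition $\sum c_j \le r-\ell$ of Lemma~\ref{assmpution_chen}. Concretely, I would parameterize the adversary by $(D,\sigma,c_2,a_2,c_3,a_3)$ and a threshold $\tau^\ast$, write ALG's cost and OPT's cost as piecewise-linear functions of $\tau$ and of ALG's chosen idle-time $I$, and compute
\[
\inf_{\tau,I}\max_{\text{branch}} \frac{\mathrm{ALG}(\tau,I)}{\mathrm{OPT}}.
\]
The free parameters are then chosen to push this quantity to $2.1$. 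To respect schedulability across branches, $j_2$ and $j_3$ must be made ``urgent'' only by virtue of ALG's delay, not in the absolute sense, which restricts the allowable values of $a_2,c_2,a_3,c_3$.

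The main obstacle is precisely this tight multi-parameter tuning: each additional job in the adversary's arsenal raises the achievable constant but also introduces more schedulability constraints and more branches for ALG to exploit; conversely, ALG can always mix ``turn on a bit earlier'' with ``use a shorter idle timeout,'' so one has to verify that no such continuous mixture drops the ratio below $2.1$. I expect the tight configuration to involve two or three carefully placed jobs and to reduce, after algebra, to a small linear-fractional optimization whose optimum is $21/10$; verifying that the optimum really exceeds $2.06$ (and not merely matches it) is where the main calculation lives. Symmetric care will be needed for the endgame: the adversary can always stop after ALG has incurred its commitments, so the final accounting must credit OPT with only the work it is actually forced to perform and not with any optional idle cost.
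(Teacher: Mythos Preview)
Your high-level strategy matches the paper's: a staged adversary that branches on ALG's delay before first turning on, then alternates between punishing short delays (release an urgent job that forces the second machine) and punishing short idle timeouts (release a tiny job just after the idle window closes). But the proposal, as written, has real gaps that prevent it from reaching $2.1$.

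First, your probe job is set up backwards. You take $j_1$ with \emph{small slack} $D-c_1=\sigma$; that removes ALG's freedom, since it must turn on almost at once. The paper instead uses a tiny $c_1$ and a generous $d_1$, so ALG's first nontrivial choice is the delay $x_1$ before turning on, and the adversary branches on whether $x_1\le\beta$ or $x_1>\beta$ for a threshold $\beta$ to be optimised. Second, ``two or three carefully placed jobs'' is not enough. With one round (one urgent job plus one post-idle job) you recover only the $2.06$ bound of Chen et al.; to push past it the paper iterates the construction for three rounds, releasing five jobs in Case~A ($x_1\le\beta$) and four in Case~B ($x_1>\beta$). Each round yields a linear upper bound on ALG's next idle time or next delay (e.g.\ $w_1\le f_1$, $x_2\le f_2+(\alpha-1)w_1$, $w_2\le f_3+(\alpha^2-2\alpha)w_1$, and analogously $y_1,x_2,y_2$ in Case~B), and only after chaining three such bounds does the final linear-fractional ratio exceed $2.1$ in both cases simultaneously.

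Third, the optimum is not the clean $21/10$ you conjecture. The paper's numerics give $\beta\approx 0.4745$ and target $\alpha\approx 2.1068$, with the two branch ratios $C_A\approx 2.1074$ and $C_B\approx 2.1070$; the stated bound $2.1$ is a rounding. Finally, and most importantly, the entire content of this lower-bound proof \emph{is} the chain of inequalities and the resulting two-variable optimisation in $(\alpha,\beta)$; your proposal correctly names the min--max to be solved but does not carry out any of it, so at present there is no proof, only an outline of the right shape.
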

\begin{proof}
Our strategy is quite simple and standard. 
The adversary, Adv, gives requests, one by one, 
so that each request 
blames
the last action of the algorithm. 
Fix an arbitrary algorithm ALG and a target CR lower bound, $\alpha$, 
we want to prove.
Before the formal proof, we briefly look at the
basic strategy of Adv. 
Recall that $\ps=1$ in this proof.

The first request by Adv is $j_1=(0,d_1,c_1)$,
where its execution time $c_1$ is tiny and $d_1$ should not be too
small.
ALG must execute $j_1$ at some time before $d_1$, say at $d_1-x_1$ on
one of the two machines, say $M_1$ (Fig.~\ref{fig:B-1}).
Here we have two cases.

\begin{figure}[htb]
\centering
\includegraphics[scale=0.55]{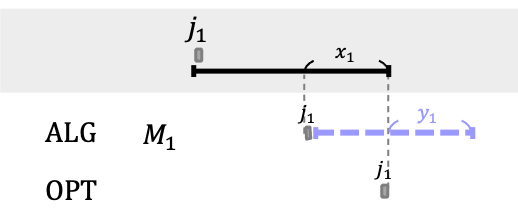}
\caption{Illustration of the incoming job $j_1$}
\label{fig:B-1}
\end{figure}

Suppose that $x_1$ is relatively large. Then Adv sees how long ALG
stays in the idle state after the execution of $j_1$. 
We can assume without loss of generality
that this idle state continues at least until 
$d_1$ and may be more for additional $y_1$ as shown in
Fig.~\ref{fig:B-1}.
(If the idle state ends before $d_1$, then
Adv immediately gives another tiny request with the same deadline
$d_1$. This situation is similar to that ALG postpones its execution
of $j_1$ up to this time. Since ALG has already spent a
turn-on cost of 1, it is not hard to show that Adv's job becomes
easier. We omit details.)
From the OPT side, it suffices to execute a tiny $c_1$ at $d_1$.
Here ALG cannot have a long $y_1$ since the CR at this moment is $\frac{1+c_1+x_1+y_1}{1+c_1}$ 
(both ALG and OPT need a turn-on cost of 1, since this is the beginning of the game),
which may exceed $\alpha$ and the game would end. 
So $y_1$ is relatively small, for which Adv gives a similar request right after the idle time expires. 
As shown in Fig.~\ref{fig:B-2}, 
OPT can manage these two requests by being ON from $d_1$ to $a_2$,
thus 
blaming
the small value of $y_1$.

\begin{figure}[htb]
\centering
\includegraphics[scale=0.55]{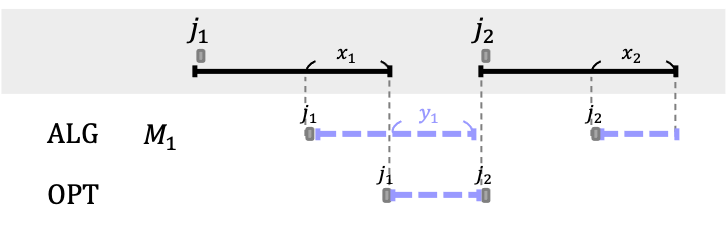}
\caption{Illustration of the incoming job $j_2$}
\label{fig:B-2}
\end{figure}

What if $x_1$ is relatively small? 
Then Adv gives a request $j_2$ as shown in Fig.~\ref{fig:A-2} immediately after ALG has started the execution of $j_1$.
Note that $j_2$ has the same deadline as $j_1$ (i.e. $d_1=d_2$) and its execution time is $x_1$ (i.e. $c_2=x_1$). 
Thus ALG cannot execute $j_2$ on $M_1$ and it turns on $M_2$ meaning ALG has to pay a new turn-on cost. 
OPT can manage this by being on from slightly before $d_1-x_1$ to $d_1$,
thus
blaming
the shortness of $x_1$.

\begin{figure}[htb]
\centering
\includegraphics[scale=0.55]{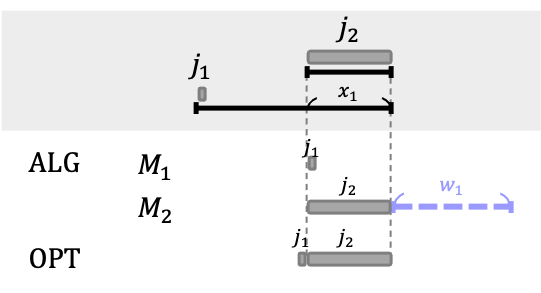}
\caption{Case A with $j_1$ and $j_2$}
\label{fig:A-2}
\end{figure}

Now we start our formal proof. As mentioned above, Adv has basically two different strategies depending on the first response of ALG, 
namely $x_1\leq\beta$ (Case A) and $x_1 > \beta$ (Case B), 
where $\beta$ is some constant to be optimized later. 
Let us look at Case A first.  
As shown above, 
it proceeds to the situation illustrated in Fig.~\ref{fig:A-2} and now ALG selects its new idle time $w_1$ on $M_2$.
Of course $M_1$ can also have some idle time. 
However, as seen in a moment, 
our Adv always gives a next request after both machines become OFF.  
Therefore without loss of generality,
we can assume only one machine which is busy until later than the other enters an idle state and the other turns off immediately when it finishes all the assigned requests. 
At the moment of Fig.~\ref{fig:A-2}, 
the CR is $\frac{2+c_1+x_1+w_1}{1+c_1+x_1}$ 
(recall $\ps=1$). Here we set $c_1=0$ for the exposition (and will do the same for a tiny execution time in the remaining part, too). 
This does not lose much sense since we can make $c_1$ arbitrarily small and it always appears as a sum with far greater values. 
Thus our current CR is $\frac{2+x_1+w_1}{1+x_1}$.
If this value is greater than $\alpha$, Adv has achieved its goal and the game ends. 
For the game to continue it must be $\frac{2+x_1+w_1}{1+x_1} \leq\alpha$. 
This implies
\begin{align*}
w_1\leq \alpha(1+x_1)-(2+x_1)
   \leq \alpha+(\alpha-1)x_1-2
   \leq \alpha+(\alpha-1)\beta-2 \; (=f_1)
\end{align*}
since $\alpha\geq1$ and $x_1\leq\beta$. 
Let $f_1$ be the value of the right-hand side.

\begin{figure}[htb]
\centering
\includegraphics[scale=0.55]{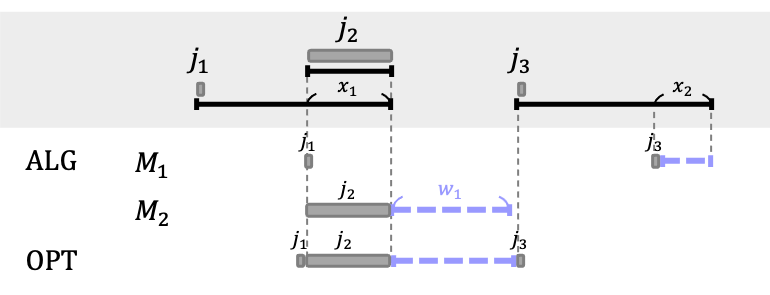}
\caption{Case A with $j_1, j_2$ and $j_3$}
\label{fig:A-3}
\end{figure}

Next, the adversary releases another tiny request $j_3$ after the idle
period of $M_2$. See Fig.~\ref{fig:A-3}.
At this moment, both $M_1$ and $M_2$ are OFF, so we can use $M_1$
without loss of generality for $j_3$.
Similarly as before, ALG executes $j_3$ at $d_3-x_2$ and has an idle time of $x_2$. 
OPT executes $j_3$ immediately when it 
comes by keeping
its idle state for $w_1$ assuming that $w_1\leq 1$, 
which can be verified after we eventually fix all parameter values
(indeed, $w_1\leq 0.63197$ for our final setting of the parameters).
Note that ALG needs three turn-ons and OPT one, 
so the current CR is $\frac{3+x_1+w_1+x_2}{1+x_1+w_1}$. 
For the game to be continued, it must be
\begin{align*}
x_2\leq \alpha(1+x_1+w_1)-(3+x_1+w_1)
\leq \alpha-3+(\alpha-1)\beta+(\alpha-1)w_1
\leq f_2+(\alpha-1)w_1
\end{align*}
by letting $\alpha-3+(\alpha-1)\beta = f_2$.

\begin{figure}[htb]
\centering
\includegraphics[scale=0.55]{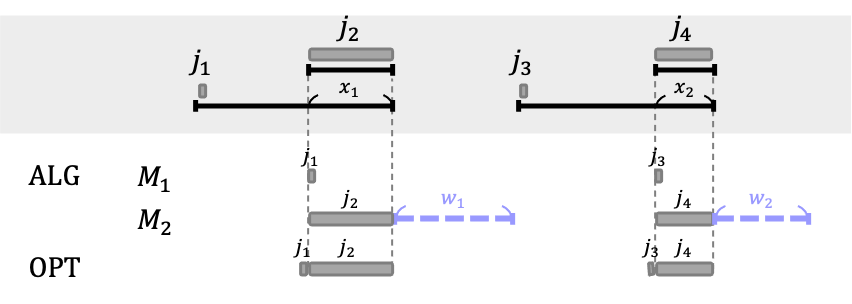}
\caption{Case A with $j_1$ to $j_4$}
\label{fig:A-4}
\end{figure}


The adversary then releases $j_4$ exactly as it did for $j_2$
as shown in Fig.~\ref{fig:A-4}.  
Here, it is better for OPT to turn
off at $d_2$ and turn on at $a_4$ since Adv selects a large $d_3-a_3$.
The CR is $\frac{4+x_1+w_1+x_2+w_2}{2+x_1+x_2}$.
For the game to continue, we must have
\begin{align*}
w_2 &\leq \alpha(2+x_1+x_2)-(4+x_1+w_1+x_2)\\
&= 2\alpha-4+(\alpha-1)x_1+(\alpha-1)x_2-w_1\\
&\leq 2\alpha-4+(\alpha-1)\beta+(\alpha-1)(f_2+(\alpha-1)w_1)-w_1\\
&= 2\alpha-4+(\alpha-1)\beta+(\alpha-1)f_2+(\alpha-1)^2 w_1-w_1\\
&= 2\alpha-4+(\alpha-1)(\beta+f_2)+(\alpha^2-2\alpha)w_1\\
&=f_3+(\alpha^2-2\alpha)w_1
\end{align*}
by letting $f_3=2\alpha-4+(\alpha-1)(\beta+f_2)$.

\begin{figure}[htb]
\centering
\includegraphics[scale=0.55]{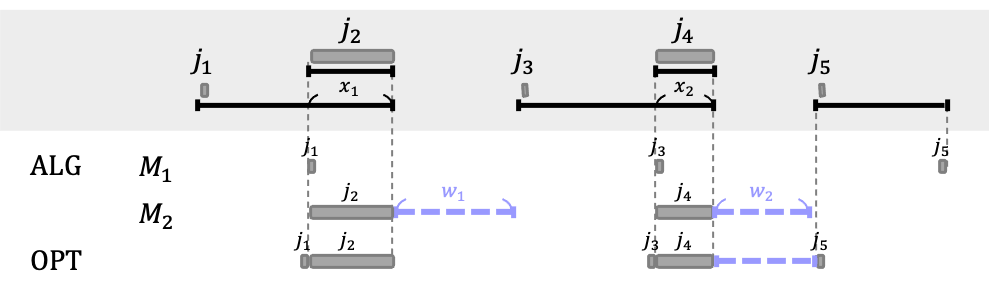}
\caption{Case A ending with $j_5$}
\label{fig:A-5}
\end{figure}

Finally,
the adversary releases a tiny request $j_5$ as shown in
Fig.~\ref{fig:A-5}. 
For ALG, we impose only a turn-on cost which ALG at least spends. 
OPT can manage this request by continuing its idle state as
before. Now the CR is at least
\begin{align*}
\frac{5+x_1+w_1+x_2+w_2}{2+x_1+x_2+w_2}&\geq
\frac{5+\beta+w_1+f_2+(\alpha-1)w_1+f_3+(\alpha^2-2\alpha)w_1}{2+\beta+f_2+(\alpha-1)w_1+f_3+(\alpha^2-2\alpha)w_1}\\
&=\frac{5+\beta+f_2+f_3+(\alpha^2-\alpha)w_1}{2+\beta+f_2+f_3+(\alpha^2-\alpha-1)w_1}\\
&\geq
\frac{5+\beta+f_2+f_3+(\alpha^2-\alpha)f_1}{2+\beta+f_2+f_3+(\alpha^2-\alpha-1)f_1}=C_A.
\end{align*}
The first inequality holds for the following reason:
$x_1$, $x_2$ and $w_2$ appear both in the numerator and the
denominator, the fraction becomes minimum when all $x_1$,
$x_2$ and $w_2$ are maximum. For the second inequality,
recall that our target CR, $\alpha$, is greater than 2. So
$\frac{(\alpha^2-\alpha)}{\alpha^2-\alpha-1}\leq2$, which means the
fraction becomes minimum when $w_1$ is maximum. 
Thus the CR is at least
$C_A$ for Case A.




We now look at Case B, namely $x_1>\beta$.
After the request $j_1$ (Fig.~\ref{fig:B-1}), 
our CR is $\frac{1+x_1+y_1}{1}$ 
(recall $\ps=1$ and\ccc that we ignore the tiny execution time). 
Thus for the game to continue, it must be
\begin{align*}
y_1\leq \alpha-1-x_1 \leq \alpha-1-\beta \; (=g_1).
\end{align*}

Then the adversary releases the second request, $j_2$, 
as shown in Fig.~\ref{fig:B-2}, 
which is very similar to $j_3$ of Case A.
The CR is $\frac{2+x_1+y_1+x_2}{1+y_1}$. 
Thus for the game to continue, it must be
\begin{align*}
x_2 &\leq \alpha(1+y_1)-(2+x_1+y_1)\\
&=(\alpha-1)y_1+\alpha-2-x_1\\
&\leq(\alpha-1)y_1+\alpha-2-\beta\leq (\alpha-1)y_1+g_2 \;
(g_2=\alpha-2-\beta).
\end{align*}\ccc

\begin{figure}[htb]
\centering
\includegraphics[scale=0.55]{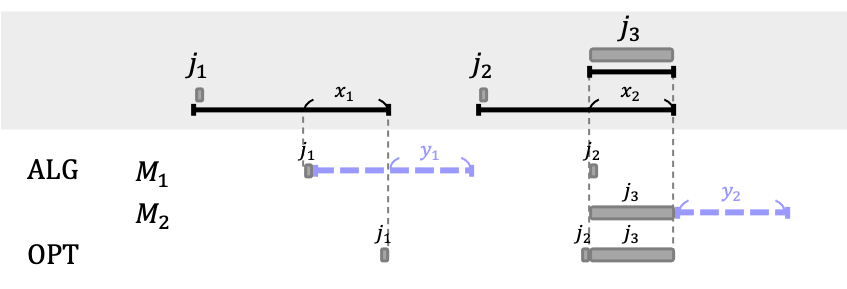}
\caption{Case B with $j_1, j_2$ and $j_3$}
\label{fig:B-3}
\end{figure}

The next request $j_3$ is similar to $j_4$ of Case A. 
See Fig.~\ref{fig:B-3}.
Our CR is $\frac{3+x_1+y_1+x_2+y_2}{2+x_2}$. 
To continue the game, we need to have
\begin{align*}
y_2 &\leq \alpha(2+x_2)-(3+x_1+y_1+x_2)\\
&= (\alpha-1)x_2+2\alpha-3-x_1-y_1\\
&\leq (\alpha-1)((\alpha-1)y_1+g_2)+2\alpha-3-\beta-y_1\\
&= (\alpha-1)^2 y_1+(\alpha-1)g_2+2\alpha-3-\beta-y_1\\
&= (\alpha^2-2\alpha)y_1+(\alpha-1)g_2++2\alpha-3-\beta\\
&= (\alpha^2-2\alpha)y_1+g_3
\end{align*}
by letting $g_3=(\alpha-1)g_2+2\alpha-3-\beta$.

\begin{figure}[htb]
\centering
\includegraphics[scale=0.55]{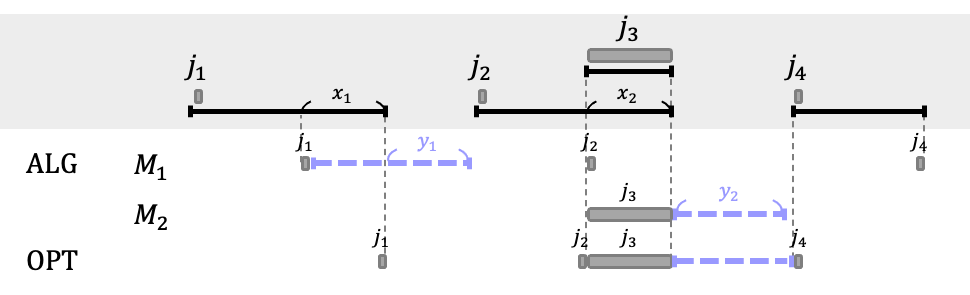}
\caption{Case B ending with $j_4$}
\label{fig:B-4}
\end{figure}

The last request is $j_4$. See Fig.~\ref{fig:B-4}. 
Our CR is
\begin{align*}
\frac{4+x_1+y_1+x_2+y_2}{2+x_2+y_2}&\geq
\frac{4+\beta+y_1+(\alpha-1)y_1+g_2+(\alpha^2-2\alpha)y_1+g_3}{2+(\alpha-1)y_1+g_2+(\alpha^2-2\alpha)y_1+g_3}\\
&=\frac{(\alpha^2-\alpha)y_1+4+\beta+g_2+g_3}{(\alpha^2-\alpha-1)y_1+2+g_2+g_3}.
\end{align*}
Implying inequalities is pretty much the same as before.\ccc
To make the first fraction minimum, 
$x_1$ should be minimum since it appears only in the numerator and $x_2$ and $y_2$ maximum since they appear in both. 
And, again, since
$\frac{(\alpha^2-\alpha)}{\alpha^2-\alpha-1}\leq2$, 
the fraction is minimum for maximum $y_1$.  
Thus the CR is at least
\begin{align*}
C_B= \frac{(\alpha^2-\alpha)g_1+4+\beta+g_2+g_3}{(\alpha^2-\alpha-1)g_1+2+g_2+g_3}.
\end{align*}

For $\beta=0.4745$ and $\alpha=2.1068$, 
our numerical calculation shows that $C_A=2.107447$ and $C_B=2.106989$, 
and the theorem is proved.
\end{proof}

\section{Concluding Remarks}
Obvious future work is to narrow the gap. For the lower bound, 
one can easily notice that increasing the number of stages (currently, it is three) might help. 
This is correct and in fact one more stage can give us a strictly better bound. 
Unfortunately the degree of improvement is already pretty small and getting even smaller in further stages. 
For the upper bound, it is obviously important to consider more flexible structures for delaying requests. 
Our present algorithm executes all the pending requests when one request comes to its due time.
It would be even more important to make our CR a function in $\ps$.  
Our open question is that a smaller $\ps$ very likely implies a better CR.\ccc

\newpage 
\nocite{*}
\bibliography{bibliography}


\longdelete{
\subsection{Time Interval}
In~\cite{irani2007algorithms, chen2015online}, they proved that a time interval in which the status of the optimal offline schedule is OFF (which they call a \emph{sleep-interval}) intersects at most two time intervals in which the status of their online algorithms
is ON (which they call \emph{awaken-intervals}).
See Fig.~\ref{fig:intersection} for a more intuitive description.
The first intersection appears when OPT becomes OFF earlier than ALG as ALG needs a fixed idle time after the assigned jobs are finished. 
On the contrary, OPT can be switched off immediately when the jobs are done. 
Then, the second intersection comes when ALG becomes ON earlier than OPT since ALG is designed to be turned on to execute some job(s) to avoid the scary adversary. 
Relatively, as OPT knows all input jobs in advance, there is a chance for OPT to perform the job(s) later than ALG. 
In this way, one OFF state of OPT is intersected at most two ON states of ALG.
In particular, the first and last OFF state for the entire OPT execution sequence will only intersect at most one ON state for ALG.

\begin{figure}[htb]
\centering
\includegraphics[scale=0.7]{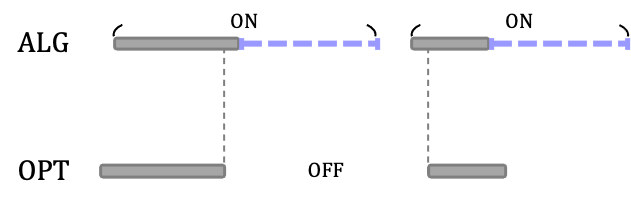}
\caption{An illustration of the intersection between the interval of the OFF status of OPT and the two intervals of the ON status of their ALG in \cite{irani2007algorithms,chen2015online}}
\label{fig:intersection}
\end{figure}

As mentioned, 
Chen et al.~\cite{chen2015online} derived the competitive ratio by analyzing the relative positions between the time intervals of OPT and ALG. 
Then they charged the energy consumed by their designed schedule to that consumed by the optimal offline schedule to obtain the claimed bound.
However, by the argument mentioned above, the sleep interval of OPT is counted at most twice when considering a single awaken interval of ALG, which results in a 4-competitive schedule.
This is a genuinely classical technique for the class of problems.
Motivated by the design of their time intervals~\cite{irani2007algorithms, chen2015online}, 
we cut the time line of our algorithm into phases, where all the input jobs and situations have been considered and no doubt of repeated calculation (see Section~\ref{UB} for more details).
}
  
\longdelete{
\subsection{Remaining Part of the Proof of Lemma 3}

\begin{clm}
$$c_{x_1}+c_{x_2}+c_{z_1}+\ps \theta \geq 1.5$$
\label{claim}
\end{clm}
\begin{proof}
We have a sequence of time relations:\\
\hspace*{15mm} (i) $t_2-t_1\leq 0.5$ due to the margin,\\
\hspace*{15mm} (ii) $t_3\leq t_2$ ($x_1$ must be executed before its deadline),\\
\hspace*{15mm} (iii) $t_1\leq t_4$ (obviously),\\
\hspace*{15mm} (iv) $t_3-t_4\leq t_3-t_1$ (by (iii)) $\leq
t_2-t_1$ (by (ii)) $\leq 0.5$ (by (i)),\\
\hspace*{15mm} (v) $t_5-t_4\geq 2/\ps$ (the total idle time),\\
\hspace*{15mm} (vi) $t_4-t_5\leq -2/\ps$ (inversion of (v)), and\\
\hspace*{15mm} (vii) $c_{x_1}+c_{x_2}+c_{z_1}+\theta=t_5-t_3=-(t_3-t_5)\geq 2/\ps -0.5$ (by (iv)+(vi)).\\
We may need a bit more explanation for (i). Recall that if $x_1$ is
delayed and nothing happens, $t_2-t_1$ is exactly 0.5. However, some
(small) job that comes after $x_1$'s execution has started
may interrupt $x_1$ and be inserted. It should also be noted that if
$d_{x_1}-(a_{x_1}+c_{x_1}) < 0.5$, then $x_1$ is not delayed in the
first place. Now we have from (vii) (note $\ps\leq 1$)\ccc
\hspace{20mm} $$c_{x_1}+c_{x_2}+c_{z_1}+\ps\theta \geq \ps(c_{x_1}+c_{x_2}+c_{z_1}+\theta)
\geq 2-0.5\ps \geq 1.5.$$
\end{proof}
  
Thus we have
\begin{align*}
rO_n(i)-A(i)&\geq 2\times 1.5 +c_{z_1}-(c_{w_1}+c_{u_1})-3=c_{z_1}-(c_{w_1}+c_{u_1}),
\end{align*}
meaning (3) is also true.
In general, we have more (or less) jobs in each job type. 
However,
one can see that the definitions of $t_1$ through $t_5$ are not affected by that (only the first job, $x_1$ is important). 
Also we can simply replace $c_{x_1}$ by the sum of execution times of type-AO jobs and similarly for type-A and type-O jobs. 
Thus the extension to the general case is straightforward and details may be omitted.

We next consider a dual-machine phase as shown in Fig.~\ref{fig:up5},
where\ccc there are two executions, $y_1$ and $y_2$ on $M_P$
while $M_S$ is busy. (Recall $M_P$ and $M_S$ are swapped when $M_S$
turns on. Now $M_S$ is busy and it may not be available for urgent jobs.)\ccc
Let the first execution be $E_1$ and the second one $E_2$.
We first consider the case that $E_2$ does not exist. Namely, there
is no $x_3$, $x_4$ or $y_2$ and OPT has an idle time after having
executed $z_1$.
Thus the phase would be ending at the end of the idle state following
the execution of $y_1$\ccc and we prove the lemma for this case first. 
As before, we use specific examples for jobs to be executed in this phase,
$x_1$, $x_2$ and $y_1$ for type-AO, $w_1$ for type-A and $z_1$ for
type-O. Since the $M_S$'s turn-on is provoked, there must be a set $S$
of jobs such that their execution on $M_P$ is impossible
during the period from some $t'_1$ to some $t'_2$.\ccc
We assume that $S=\{x_1,x_2,y_1\}$, where $t'_1=t_1$ and $t'_2$ is the
deadline of $y_1$.\ccc
What we do for the generalization is the same as before, 
namely we replace $\{x_1,x_2,y_1\}$ by the real jobs in $S$, 
maybe more jobs for type-AO, replace $w_1$ by real type-A jobs, and $z_1$ by real type-O jobs.
  
Now we start with definitions of time moments (see the figure) as before. $t_1$, $t_2$ and $t_3$: the time when $M_P$ and $M_S$ become on and the time OPT becomes busy, respectively. 
$t_4$: the time when the idle state of $M_P$ is ended and this is
the end of the phase, too.\ccc
For those five jobs, we have
$A(i)=c_{x_1}+c_{x_2}+c_{y_1}+c_{w_1}+2+2$ (we now need to turn on
both $M_P$ and $M_S$), $O_f(i)\geq (c_{x_1}+c_{x_2}+c_{y_1}+c_{z_1})$,
and $O_n(i)\geq (c_{x_1}+c_{x_2}+c_{y_1}+c_{z_1}+\ps \theta)$, where $\theta$ is
the idle time (if any) to keep the OPT's machine on until $t_4$.
  
To prove formulas (2), we have (recall $\ps\leq 1$)
\begin{align*}
rO_f(i)-A(i)&\geq 3(c_{x_1}+c_{x_2}+c_{y_1}+c_{z_1})-(c_{x_1}+c_{x_2}+c_{y_1}+c_{w_1}+4.0)\\
&=2(c_{x_1}+c_{x_2}+c_{y_1}+c_{z_1})+c_{z_1}-c_{w_1}-4.0.\\
\end{align*}
Here we can claim that $c_{x_1}+c_{x_2}+c_{y_1}\geq 0.5$. 
The reason is that there is a margin of 0.5 between the end of the
execution of $x_1$ and its deadline (recall again if realizing this
margin is impossible, $M_S$ would not have been turned on).\ccc
So if $c_{x_1}+c_{x_2}+c_{y_1}< 0.5$, 
then it follows of course $c_{x_2}+c_{y_1}< 0.5$, 
which means that $x_2$ and $y_1$ could have been executed using this margin time 
(they can interrupt the execution of $x_1$) on $M_P$, resulting in a contradiction. 
Thus $rO_f(i)-A(i)\geq 2\times 0.5+c_{z_1}-c_{w_1}-4.0 =c_{z_1}-c_{w_1}-3$ and we are done for (2).
  
For formula (3), we have ($\ps\leq 1$)
\begin{align*}
rO_n(i)-A(i)&\geq
3(c_{x_1}+c_{x_2}+c_{y_1}+c_{z_1}+\ps\theta)-(c_{x_1}+c_{x_2}+c_{y_1}+c_{w_1}+4.0)\\
&=2((c_{x_1}+c_{x_2}+c_{y_1}+c_{z_1})+3\ps\theta+c_{z_1}-c_{w_1}-4.0\\
&\geq 2\ps (c_{x_1}+c_{x_2}+c_{y_1}+c_{z_1}+\theta)+c_{z_1}-c_{w_1}-4.0.\\
\end{align*}
We again have the following time relations:\\
\hspace*{15mm} (i) $t_1<t_2$ ($M_S$ never turns on before $M_P$),\\
\hspace*{15mm} (ii) $t_3<t_1$ (see below),\\
\hspace*{15mm} (iii) $t_4-t_3\geq t_4-t_1$ (by (ii)) $\geq t_4-t_2$
(by (i)) $\geq 2/\ps$.\\
For (ii) recall that $M_P$ cannot execute $x_1$, $x_2$ and $y_1$ from time $t_1$. 
Since OPT does execute those jobs by a single machine, it should have started their execution before $t_1$, meaning $t_3<t_1$.
Since $c_{x_1}+c_{x_2}+c_{y_1}+c_{z_1}+\theta=t_4-t_3$, 
we finally have $$rO_n(i)-A(i)\geq 2\ps (2/\ps)+c_{z_1}-c_{w_1}-4.0=c_{z_1}-c_{w_1},$$
and (3) is proved. 
The generalization to arbitrary number of jobs is the same as before and may be omitted.
  
Finally we consider the case that $E_2$ (or even more)\ccc exists, which can be simply done by considering that 
a new virtual phase, just an interval starting from $t_4$ and ending at $t_e$ where we do the similar calculation as above.
Note that \oural\ needs only
one turn-on cost and so the energy consumption of the new virtual phase is the job execution costs $+3$\ccc
instead of $+4$ above. Therefore we do not need to lower bound the
cost for executing type-AO jobs (as we did for
$c_{x_1}+c_{x_2}+c_{y_1}$ above). Also, since the new virtual phase obviously includes
the whole idle time of $M_S$, the proof for formula
(3) is also straightforward.\ccc Details may be omitted.
  
Thus Lemma~3 is proved.
  
And the proof of Theorem~\ref{ubtheorem} is also concluded.
}

\longdelete{
\subsection{Proof of Theorem 5}

Our strategy is quite simple and standard. 
The adversary, Adv, gives requests, one by one, 
so that each request 
blames
the last action of the algorithm. 
Fix an arbitrary algorithm ALG and a target CR lower bound, $\alpha$, 
we want to prove.
Before the formal proof, we briefly look at the
basic strategy of Adv. Throughout this proof, $\ps=1$, i.e., the idle
time is equal to the idle cost.\ccc

The first request by Adv is $j_1=(0,d_1,c_1)$,
where its execution time $c_1$ is tiny and $d_1$ should not be too
small.
ALG must execute $j_1$ at some time before $d_1$, say at $d_1-x_1$ on
one of the two machines, say $M_1$.
Here we have two cases. 

Suppose that $x_1$ is relatively large. Then Adv sees how long ALG
stays in the idle state after the execution of $j_1$. 
We can assume without loss of generality
that this idle state continues at least until 
$d_1$ and may be more for additional $y_1$ as shown in
Fig.~\ref{fig:LB} (a).
(If the idle state ends before $d_1$, then
Adv immediately gives another tiny request with the same deadline
$d_1$. This situation is similar to that ALG postpones its execution
of $j_1$ up to this time. Since ALG has already spent a
turn-on cost of 1, it is not hard to show that Adv's job becomes
easier. We omit details.)\ccc
From the OPT side, it suffices to execute a tiny $c_1$ at $d_1$.
Here ALG cannot have a long $y_1$ since the CR at this moment is $\frac{1+c_1+x_1+y_1}{1+c_1}$ 
(both ALG and OPT need a turn-on cost of 1, since this is the beginning of the game),
which may exceed $\alpha$ and the game would end. 
So $y_1$ is relatively small, for which Adv gives a similar request right after the idle time expires. 
As shown in Fig.~\ref{fig:B-2}, 
OPT can manage these two requests by being on from $d_1$ to $a_2$,
thus 
blaming
the small value of $y_1$.

\begin{figure}[htb]
\centering
\includegraphics[scale=0.55]{B-2.png}
\caption{Case B with $j_1$ and $j_2$}
\label{fig:B-2}
\end{figure}

What if $x_1$ is relatively small? 
Then Adv gives a request $j_2$ as shown in Fig.~\ref{fig:A-2} immediately after ALG has started the execution of $j_1$.
Note that $j_2$ has the same deadline as $j_1$ (i.e. $d_1=d_2$) and its execution time is $x_1$ (i.e. $c_2=x_1$). 
Thus ALG cannot execute $j_2$ on $M_1$ and it turns on $M_2$ meaning ALG has to pay a new turn-on cost. 
OPT can manage this by being on from slightly before $d_1-x_1$ to $d_1$,
thus
blaming
the shortness of $x_1$.\ccc

\begin{figure}[htb]
\centering
\includegraphics[scale=0.55]{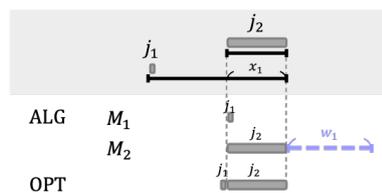}
\caption{Case A with $j_1$ and $j_2$}
\label{fig:A-2}
\end{figure}

Now we start our formal proof. As mentioned above,\ccc
Adv has basically two different strategies depending on the first response of ALG, 
namely $x_1\leq\beta$ (Case A) and $x_1 > \beta$ (Case B), 
where $\beta$ is some constant to be optimized later. 
Let us look at Case A first.  
As shown above, 
it proceeds to the situation illustrated in Fig.~\ref{fig:A-2} and now ALG selects its new idle time $w_1$ on $M_2$.
Of course $M_1$ can also have some idle time. 
However, as seen in a moment, 
our Adv always gives a next request after both machines become OFF.  
Therefore without loss of generality,
we can assume only one machine which is busy until later than the other enters an idle state and the other turns off immediately when it finishes all the assigned requests. 
At the moment of Fig.~\ref{fig:A-2}, 
the CR is $\frac{2+c_1+x_1+w_1}{1+c_1+x_1}$ 
(recall $\ps=1$). Here\ccc we set $c_1=0$ for the exposition (and will do the same for a tiny execution time in the remaining part, too). 
This does not lose much sense since we can make $c_1$ arbitrarily small and it always appears as a sum with far greater values. 
Thus our current CR is $\frac{2+x_1+w_1}{1+x_1}$.
If this value is greater than $\alpha$, Adv has achieved its goal and the game ends. 
For the game to continue it must be $\frac{2+x_1+w_1}{1+x_1} \leq\alpha$. 
This implies
\begin{align*}
w_1\leq \alpha(1+x_1)-(2+x_1)
   \leq \alpha+(\alpha-1)x_1-2
   \leq \alpha+(\alpha-1)\beta-2 \; (=f_1)
\end{align*}
since $\alpha\geq1$ and $x_1\leq\beta$. 
Let $f_1$ be the value of the right-hand side.

\begin{figure}[htb]
\centering
\includegraphics[scale=0.55]{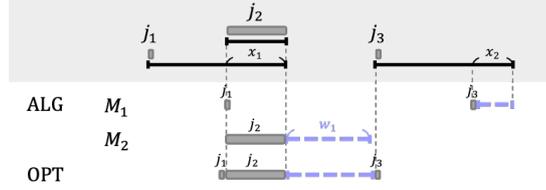}
\caption{Case A with $j_1, j_2$ and $j_3$}
\label{fig:A-3}
\end{figure}

Next, the adversary releases another tiny request $j_3$ after the idle
period of $M_2$. See Fig.~\ref{fig:A-3}.
At this moment, both $M_1$ and $M_2$ are OFF, so we can use $M_1$
without loss of generality for $j_3$.\ccc
Similarly as before, ALG executes $j_3$ at $d_3-x_2$ and has an idle time of $x_2$. 
OPT executes $j_3$ immediately when it 
comes by keeping
its idle state for $w_1$ assuming that $w_1\leq 1$, 
which can be verified after we eventually fix all parameter values
(indeed, $w_1\leq 0.63197$ for our final setting of the parameters).\ccc
Note that ALG needs three turn-ons and OPT one, 
so the current CR is $\frac{3+x_1+w_1+x_2}{1+x_1+w_1}$. 
For the game to be continued, it must be
\begin{align*}
x_2\leq \alpha(1+x_1+w_1)-(3+x_1+w_1)
\leq \alpha-3+(\alpha-1)\beta+(\alpha-1)w_1
\leq f_2+(\alpha-1)w_1
\end{align*}
by letting $\alpha-3+(\alpha-1)\beta = f_2$.

\begin{figure}[htb]
\centering
\includegraphics[scale=0.55]{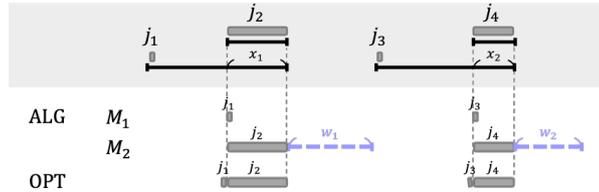}
\caption{Case A with $j_1$ to $j_4$}
\label{fig:A-4}
\end{figure}


The adversary then releases $j_4$ exactly as it did for $j_2$
as shown in Fig.~\ref{fig:A-4}.  
Here, it is better for OPT to turn
off at $d_2$ and turn on at $a_4$ since Adv selects a large $d_3-a_3$.
The CR is $\frac{4+x_1+w_1+x_2+w_2}{2+x_1+x_2}$.
For the game to continue, we must have
\begin{align*}
w_2 &\leq \alpha(2+x_1+x_2)-(4+x_1+w_1+x_2)\\
&= 2\alpha-4+(\alpha-1)x_1+(\alpha-1)x_2-w_1\\
&\leq 2\alpha-4+(\alpha-1)\beta+(\alpha-1)(f_2+(\alpha-1)w_1)-w_1\\
&= 2\alpha-4+(\alpha-1)\beta+(\alpha-1)f_2+(\alpha-1)^2 w_1-w_1\\
&= 2\alpha-4+(\alpha-1)(\beta+f_2)+(\alpha^2-2\alpha)w_1\\
&=f_3+(\alpha^2-2\alpha)w_1
\end{align*}
by letting $f_3=2\alpha-4+(\alpha-1)(\beta+f_2)$.

\begin{figure}[htb]
\centering
\includegraphics[scale=0.55]{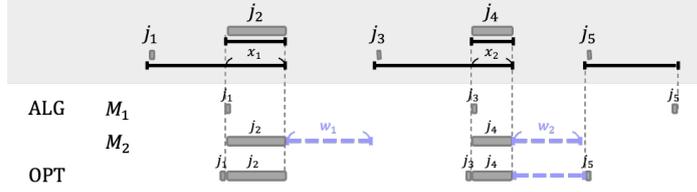}
\caption{Case A ending with $j_5$}
\label{fig:A-5}
\end{figure}

Finally,
the adversary releases a tiny request $j_5$ as shown in
Fig.~\ref{fig:A-5}. 
For ALG, we impose only\ccc a turn-on cost which ALG at least spends. 
OPT can manage this request by continuing its idle state as
before. Now the CR is at least
\begin{align*}
\frac{5+x_1+w_1+x_2+w_2}{2+x_1+x_2+w_2}&\geq
\frac{5+\beta+w_1+f_2+(\alpha-1)w_1+f_3+(\alpha^2-2\alpha)w_1}{2+\beta+f_2+(\alpha-1)w_1+f_3+(\alpha^2-2\alpha)w_1}\\
&=\frac{5+\beta+f_2+f_3+(\alpha^2-\alpha)w_1}{2+\beta+f_2+f_3+(\alpha^2-\alpha-1)w_1}.\\
&\geq
\frac{5+\beta+f_2+f_3+(\alpha^2-\alpha)f_1}{2+\beta+f_2+f_3+(\alpha^2-\alpha-1)f_1}=C_A
\end{align*}
The first inequality holds for the following reason:
$x_1$, $x_2$ and $w_2$ appear both in the numerator and the
denominator, the fraction becomes minimum when all $x_1$,
$x_2$ and $w_2$ are maximum. For the second inequality,
recall that our target CR, $\alpha$, is greater than 2. So
$\frac{(\alpha^2-\alpha)}{\alpha^2-\alpha-1}\leq2$, which means the
fraction becomes minimum when $w_1$ is maximum. Thus the CR is at least
$C_A$ for Case A.

We now look at Case B, namely $x_1>\beta$.
After the request $j_1$ (Fig.~\ref{fig:LB} (a)), 
our CR is $\frac{1+x_1+y_1}{1}$ 
(recall $\ps=1$ and\ccc that we ignore the tiny execution time). 
Thus for the game to continue, it must be
\begin{align*}
y_1\leq \alpha-1-x_1 \leq \alpha-1-\beta \; (=g_1).
\end{align*}

Then the adversary releases the second request, $j_2$, 
as shown in Fig.~\ref{fig:B-2}, 
which is very similar to $j_3$ of Case A.
The CR is $\frac{2+x_1+y_1+x_2}{1+y_1}$. 
Thus for the game to continue, it must be
\begin{align*}
x_2 &\leq \alpha(1+y_1)-(2+x_1+y_1)\\
&=(\alpha-1)y_1+\alpha-2-x_1\\
&\leq(\alpha-1)y_1+\alpha-2-\beta\leq (\alpha-1)y_1+g_2 \;
(g_2=\alpha-2-\beta).
\end{align*}

\begin{figure}[htb]
\centering
\includegraphics[scale=0.55]{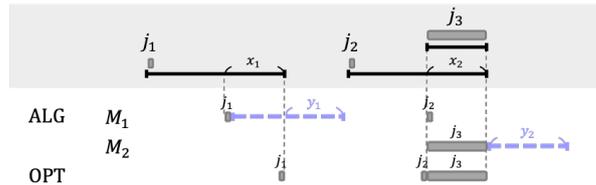}
\caption{Case B with $j_1, j_2$ and $j_3$}
\label{fig:B-3}
\end{figure}

The next request $j_3$ is similar to $j_4$ of Case A. 
See Fig.~\ref{fig:B-3}.
Our CR is $\frac{3+x_1+y_1+x_2+y_2}{2+x_2}$. 
To continue the game, we need to have
\begin{align*}
y_2 &\leq \alpha(2+x_2)-(3+x_1+y_1+x_2)\\
&= (\alpha-1)x_2+2\alpha-3-x_1-y_1\\
&\leq (\alpha-1)((\alpha-1)y_1+g_2)+2\alpha-3-\beta-y_1\\
&= (\alpha-1)^2 y_1+(\alpha-1)g_2+2\alpha-3-\beta-y_1\\
&= (\alpha^2-2\alpha)y_1+(\alpha-1)g_2++2\alpha-3-\beta\\
&= (\alpha^2-2\alpha)y_1+g_3
\end{align*}
by letting $g_3=(\alpha-1)g_2+2\alpha-3-\beta$.

The last request is $j_4$. See Fig.~\ref{fig:B-4}. 
\begin{figure}[htb]
\centering
\includegraphics[scale=0.55]{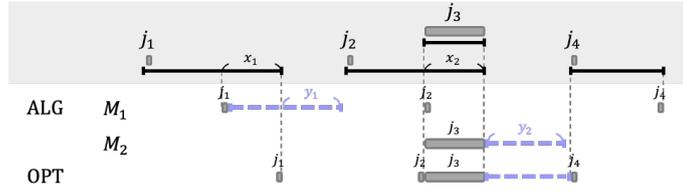}
\caption{Case B ending with $j_4$}
\label{fig:B-4}
\end{figure}
Our CR is
\begin{align*}
\frac{4+x_1+y_1+x_2+y_2}{2+x_2+y_2}&\geq
\frac{4+\beta+y_1+(\alpha-1)y_1+g_2+(\alpha^2-2\alpha)y_1+g_3}{2+(\alpha-1)y_1+g_2+(\alpha^2-2\alpha)y_1+g_3}\\
&=\frac{(\alpha^2-\alpha)y_1+4+\beta+g_2+g_3}{(\alpha^2-\alpha-1)y_1+2+g_2+g_3}.
\end{align*}
Implying inequalities is pretty much the same as before.\ccc
To make the first fraction minimum, 
$x_1$ should be minimum since it appears only in the numerator and $x_2$ and $y_2$ maximum since they appear in both. 
And, again, since
$\frac{(\alpha^2-\alpha)}{\alpha^2-\alpha-1}\leq2$, 
the fraction is minimum for maximum $y_1$.  
Thus the CR is at least
\begin{align*}
C_B= \frac{(\alpha^2-\alpha)g_1+4+\beta+g_2+g_3}{(\alpha^2-\alpha-1)g_1+2+g_2+g_3}.
\end{align*}

For $\beta=0.4745$ and $\alpha=2.1068$, 
our numerical calculation shows that $C_A=2.107447$ and $C_B=2.106989$, 
and the theorem is proved.\ccc
}

\end{document}